\newcommand{\BB}{\mathbb{B}}
\def\un{\ensuremath{\mbox{1\hspace{-.2em}l}}}
\newcommand{\zero}{\textsf{\textbf{0}}}
\newcommand{\one}{\textsf{\textbf{1}}}
\renewcommand{\bar}[1]{\overline{#1}}
\newcommand{\trans}[1][]{
	\hspace{0.3pt}\raisebox{.5ex}{
		\begin{tikzpicture}[descr/.style={fill=white,inner sep=2.5pt}]
			\path (0,0) edge[->, >=angle 60] node[above]
			{\ensuremath{{\scriptstyle #1}}} (0.7,0);
		\end{tikzpicture}
	}
}
\newcommand{\Ctwo}{\ensuremath{\mathscr{C}^r}}
\newcommand{\Cone}{\ensuremath{\mathscr{C}^\ell}}
\newcommand{\ctwo}{\ensuremath{c^r}}
\newcommand{\cone}{\ensuremath{c^\ell}}
\newcommand{\Dp}{\ensuremath{\mathscr{D}^+}}
\newcommand{\Dm}{\ensuremath{\mathscr{D}^-}}
\newcommand{\Dpm}{\ensuremath{\mathscr{D}^\pm}}
\newcommand{\com}{\ensuremath{c}}
\newcommand{\cop}[2]{\texttt{copy}(#1, #2)}
\newcommand{\pcop}{\texttt{copy}}
\newcommand{\copc}[3]{\texttt{copy\_c}(#1, #2, #3)}
\newcommand{\pcopc}{\texttt{copy\_c}}
\newcommand{\copp}[2]{\texttt{copy\_p}(#1, #2)}
\newcommand{\pcopp}{\texttt{copy\_p}}
\newcommand{\sync}{\texttt{sync}}
\newcommand{\clock}[3]{\texttt{incUp}(#1, #2, #3)}
\newcommand{\pclock}{\texttt{incUp}}
\newcommand{\counter}[3]{\texttt{decUp}(#1, #2, #3)}
\newcommand{\pcounter}{\texttt{decUp}}
\newcommand{\shift}[1]{\texttt{shift}(#1)}
\newcommand{\pshift}{\texttt{shift}}
\newcommand{\update}[1]{\texttt{update}(#1)}
\newcommand{\pupdate}{\texttt{update}}
\newcommand{\erase}[1]{\texttt{erase}(#1)}
\newcommand{\perase}{\texttt{erase}}
\newcommand{\expand}[1]{\texttt{expand}(#1)}
\newcommand{\pexpand}{\texttt{expand}}
\newcommand{\size}[1]{\texttt{size}(#1)}
\newcommand{\seq}[1]{\texttt{sequence}(#1)}
\newcommand{\convp}[1]{\texttt{fix1}(#1)}
\newcommand{\pconvp}{\texttt{fix1}}
\newcommand{\convn}[1]{\texttt{fix0}(#1)}
\newcommand{\pconvn}{\texttt{fix0}}
\newcommand{\simp}[1]{\texttt{simp}(#1)}
\newcommand{\psimp}{\texttt{simp}}
\newcommand{\compa}[1]{\texttt{comp1}(#1)}
\newcommand{\pcompa}{\texttt{comp1}}
\newcommand{\compb}[1]{\texttt{comp2}(#1)}
\newcommand{\pcompb}{\texttt{comp2}}
\newcommand{\comp}[1]{\texttt{comp}(#1)}
\newcommand{\pcomp}{\texttt{comp}}
\newcommand{\allzero}{\ensuremath{(\zero^n,\zero^m)}}
\newcommand{\allone}{\ensuremath{(\one^n,\one^m)}}
\newcommand{\altzero}{\ensuremath{((\zero\one)^{\frac{n}{2}},(\zero\one)^{\frac{m}{2}})}}
\newcommand{\altone}{\ensuremath{((\one\zero)^{\frac{n}{2}},(\one\zero)^{\frac{m}{2}})}}
\newcommand{\instituteISSS}{I3S, UMR7271 CNRS et Universit{\'e} de Nice Sophia Antipolis, 06900 Sophia Antipolis, France}
\newcommand{\instituteIBISC}{IBISC, EA4526, Universit{\'e} d'{\'E}vry Val-d'Essonne, 91000 {\'E}vry, France}
\newcommand{\instituteLIF}{Aix-Marseille Universit{\'e}, CNRS, LIF UMR 7279, 13000 Marseille, France}
\newcommand{\instituteIXXI}{IXXI, Institut rh{\^o}ne-alpin des syst{\`e}mes complexes, 69000 Lyon, France}
\newcommand{\writeDTmail}{\{\texttt{tarek.melliti}, \texttt{damien.regnault}\}\texttt{@ibisc.univ-evry.fr}}
\newcommand{\writeJmail}{\texttt{jeremy.sobieraj}\texttt{@gmail.com}}
\newcommand{\writeMmail}{\texttt{mathilde.noual}\texttt{@unice.fr}}
\newcommand{\writeSmail}{\texttt{sylvain.sene}\texttt{@univ-amu.fr}}
\newcommand{\Mat}{Mathilde Noual}
\newcommand{\Jer}{J{\'e}r{\'e}my Sobieraj}
\newcommand{\Dam}{Damien Regnault}
\newcommand{\Tar}{Tarek Melliti}
\newcommand{\Syl}{Sylvain Sen{\'e}}
\newcommand{\Mshort}{M. Noual}
\newcommand{\Dshort}{D. Regnault}
\newcommand{\Tshort}{T. Melliti}
\newcommand{\Sshort}{S. Sen{\'e}}
\begin{document}

\bibliographystyle{splncs_srt}


\title{Full characterisation of attractors of two tangentially intersected asynchronous Boolean automata cycles}

\titlerunning{Asynchronous dynamics Boolean automata double-cycles}

\author{\Tar\inst{1} \and \Mat\inst{2} \and \Dam\inst{1} \and \Syl\inst{3,4} \and \Jer\inst{1}}

\authorrunning{\Tshort{}, \Mshort{}, \Dshort{}, \Sshort{} and \Jshort{}}

\institute{\instituteIBISC{}\\(\writeDTmail{}, \writeJmail{}) \and \instituteISSS{}\\(\writeMmail{}) \and \instituteLIF{}\\(\writeSmail{}) 
	\and \instituteIXXI}

\maketitle

\begin{abstract}
	The understanding of Boolean automata networks dynamics takes an important place in various domains of computer science such as 
	computability, complexity and discrete dynamical systems. In this paper, we make a step further in this understanding by focusing on 
	their cycles, whose necessity in networks is known as the brick of their complexity. We present new results that provide a 
	characterisation of the transient and asymptotic dynamics, \emph{i.e.} of the computational abilities, of asynchronous Boolean automata 
	networks composed of two cycles that intersect at one automaton, the so-called double-cycles. To do so, we introduce an efficient 
	formalism inspired by algorithms to define long sequences of updates, that allows a better description of their dynamics than previous 
	works in this area.
\end{abstract}

\begin{keywords} 
	Interaction networks, Boolean automata networks, double-cycles, asynchronous dynamics.
\end{keywords}

\section{Introduction}
\label{sec:intro}

Interaction networks occupying a ceaselessly increasing space in the knowledge of the objects that surround or even constitute 
us (\emph{e.g.} genetic regulation networks) as well as in our daily life (\emph{e.g.} social networks), it is now necessary to understand 
more their intrinsic properties. This paper follows this statement by using automata networks (ANs) as models of interaction networks. 
ANs have been chosen for two major reasons. First, although this computational model is among the firsts 
developed~\cite{McCulloch1943,vonNeumann1966}, lots of their intrinsic computational properties are not known nowadays. Second, their 
simplicity, and the concepts and parameters needed to define them, make them particularly adapted to capture the essence of, and model, real 
interaction systems at a high abstraction level, such as physical, biological and social systems~\cite{Ising1925,Kauffman1969,Schelling1971}.
The present work precisely takes place at the frontier of theoretical computer science and fundamental bio-informatics, that aims at 
analysing and explaining formally the dynamics of biological regulations, that have constituted the core of molecular 
biology~\cite{Jacob1961,Jacob1960}.\smallskip

Fundamental bio-infor\-matics gives rise to many theoretical and applied questions. In this context, Boolean automata networks (BANs) 
play a leading role. Indeed, since the seminal works of Kauffman~\cite{Kauffman1969,Kauffman1971} and Thomas~\cite{Thomas1973,Thomas1981} in 
theoretical biology, computer scientists have not stopped trying to answer their questions/conjectures. Among the latter, those that are 
central in this work are Thomas' ones, for which solutions have been proven in the discrete framework at the end of 
2000's~\cite{Remy2008,Richard2010,Richard2007}. These results, together with those of Robert~\cite{Robert1986}, highlighted that the ability 
of ANs to admit complex asymptotic behaviours only comes from the presence of cycles in their architecture. However, although the fact that 
interacting cycles are the engines of dynamical complexity is known, we don't really/perfectly know how yet. That explains why many recent 
studies focused on these specific patterns. Among them,~\cite{Demongeot2012} gave the characterisation in parallel of the dynamical 
behaviours of Boolean automata cycles (BACs). Then, time was attached to analyse the relations between the dynamical properties of cycles 
subjected to distinct updating modes, with a special attention paid to the asynchronous and the parallel ones~\cite{Noual2012}. Once the 
cycle dynamics finely understood, the natural idea was to study more complex networks. But to obtain general results for any kind of network 
remains an open problem that seems intractable at present. So, following a constructive approach and as a first step, studies have been led 
on specific patterns combining cycles, such as the double-cycles in parallel~\cite{Demongeot2012} and the flower-graphs~\cite{Didier2012} for 
instance. In addition, other studies have dealt with the convergence time of specific classes of BANs, like circular \textsc{xor} 
networks~\cite{Noual2013} and networks without negative cycles~\cite{Melliti2013}.\smallskip

This paper follows the same lines and solves a question that remained open until now: how do Boolean automata double-cycles that evolve 
asynchronously over time behave? The answer is given by emphasising original methods for the domain in the sense that they are very 
algorithmic. In particular, they allow to show that recurrent configurations are not all similar (some have peculiar features). Some of them 
can be reached by following paths of linear size according to the network sizes whereas other need quadratic sequences of updates to be 
reached. In fact, the results presented give a deeper characterisation of the attractors.\medskip

The paper is organised as follows: Section~\ref{sec:def} gives the main definitions and notations used in the paper, in particular those 
related to the double-cycles and the asynchronous updating mode; Section~\ref{sec:tools} gives the definition of the tools and methods 
developed here; finally, Section~\ref{sec:res} is dedicated to the main contributions of this paper.

\section{Definitions and notations}
\label{sec:def}

\paragraph{BANs.~} Consider $\BB = \{\zero,\one\}$ and $\text{V} = \{0, \ldots, n-1\}$ a set of $n$ Boolean automata so that $\forall i \in 
\text{V},\ x_i \in \BB$ denotes the \emph{state} of $i$. A \emph{configuration} of a BAN $\mathscr{N}$ of size $n$ instantiates the state of 
any $i$ of $V$ and is classically denoted as a vector, such that $x \in \BB^n$, or as a binary word. Formally, a BAN $\mathscr{N}$, whose 
automata set is $\text{V}$, is a set of $n$ Boolean functions, which means that $\mathscr{N} = \{\text{f}_i: \BB^n \to \BB\ |\ i \in V\}$. 
Given $i \in \text{V}$, $\text{f}_i$ is the \emph{local transition function} of $i$ that predetermines its evolution for any configuration 
$x$. Actually, that means that if $i$ is updated in $x$, its state switches from $x_i$ to $\text{f}_i(x)$. Let us define now the \emph{sign 
of an interaction} from $j$ to $i$ ($i, j \in \text{V}$) in configuration $x \in \BB^n$ with $\text{sign}_x(j,i) = \text{s}(x_j) \cdot 
(\text{f}_i(x) - \text{f}_i(\bar{x}^j))$, where $\text{s}: \BB \to \un$, with $\un = \{-1, 1\}$, is defined as $\text{s}(b) = b - \neg b$, 
and $\forall i \in \text{V},\ \bar{x}^i = (x_0, \ldots, x_{i-1}, \neg x_i, x_{i+1}, \ldots, x_{n-1})$. Interactions that are \emph{effective} 
in $x$ belongs to the set $\text{A}(x) = \{(j, i) \in \text{V}^2\ |\ \text{sign}_x(j, i) \neq 0\}$. From this is derived the 
\emph{interaction graph} of $\mathscr{N}$ that is the digraph $\text{G} = (\text{V}, \text{A})$, where $\text{A} = \bigcup_{x \in \BB^n} 
\text{A}(x)$ is the set of interactions.\smallskip

In this paper, the focus is put on BANs associated with \emph{simple} interaction graphs: if there exists $(j,i) \in \text{A}$, it is unique 
and such that $\forall x \in \BB^n, \text{sign}_x(j,i) \neq 0$ and is constant. As a consequence, $\text{sign}(j,i) \in \un$. If $\text{sign}
(j,i) = +1$ (resp. $\text{sign}(j,i) = -1$), $(j,i)$ is an activating (resp. inhibiting) interaction so that the state of $i$ tends to mimic 
(resp. negate) that of $j$. We call the \emph{signed interaction graph} of $\mathscr{N}$ the digraph obtained by labelling each arc $(i,j) 
\in \text{A}$ with $\text{sign}(i,j)$. In order not to burden the reading, we also denote it by $\text{G}$. Abusing notations, a \emph{cycle} 
$\text{C}$ of $\text{G}$ is said to be \emph{positive} (resp. \emph{negative}) if the product of the signs of the interactions that compose 
it equals $+1$ (resp. $-1$).

\paragraph{Asynchronous transition graphs.~} In a BAN $\mathscr{N}$, a couple of configurations $(x,y) \in \BB^n \times \BB^n$, such that $y$ 
is obtained by updating the state of a unique automaton of $x$ is an \emph{asynchronous transition}, and is denoted by $x \trans y$ (the 
Hamming distance $\text{d}(x, y) \leq 1$). If $x \neq y$, $x \trans y$ is said to be \emph{effective}.
Let $\text{T} = \{x \trans y\ |\ x, y \in \BB^n\}$ be the set of asynchronous transitions of $\mathscr{N}$. Digraph $\mathscr{G} = (\BB^n, 
\text{T})$ is then the \emph{asynchronous transition graph} (abbreviated simply by \emph{transition graph}) of $\mathscr{N}$, which actually 
represents the \emph{non-deterministic ``perfectly'' asynchronous discrete dynamical system} related to $\mathscr{N}$.\smallskip

Consider an arbitrary BAN $\mathscr{N}$, its transition graph $\mathscr{G} = (\BB^n, T)$ and $x \in \BB^n$ any of its possible 
configurations. A \emph{trajectory} of $x$ is any path in $\mathscr{G}$ that starts in $x$. A strongly connected component of $\mathscr{G}$ 
that admits no outgoing asynchronous transitions is a \emph{terminal strongly connected component} (TSCC). A TSCC of $\mathscr{G}$ represents 
an asymptotic behaviour of $\mathscr{N}$, \emph{i.e.} one of its \emph{attractors}. A configuration that belongs to an attractor is a 
\emph{recurrent configuration} and, for a given attractor, the number of its configurations is said to be its \emph{size}. An attractor of 
size $1$ (resp. of size greater than $1$) is a \emph{stable configuration} (resp. \emph{a stable oscillation}). We close this paragraph by 
defining the \emph{convergence time of a configuration} $x$ as the length of the shortest trajectory that leads it to an attractor and the 
\emph{convergence time of a BAN} as the highest convergence time of all configurations in $\BB^n$.

\paragraph{Boolean automata double-cycles.~} The literature has put the emphasis on BACs. The reason comes from the three following theorems 
that show that cycles are necessary for BANs to admit complex asymptotic dynamics. Now, consider $\mathscr{G}$ as the asynchronous transition 
graph of a BAN $\mathscr{N}$.

\begin{figure}[t!]
	\centerline{\scalebox{0.65}{\input{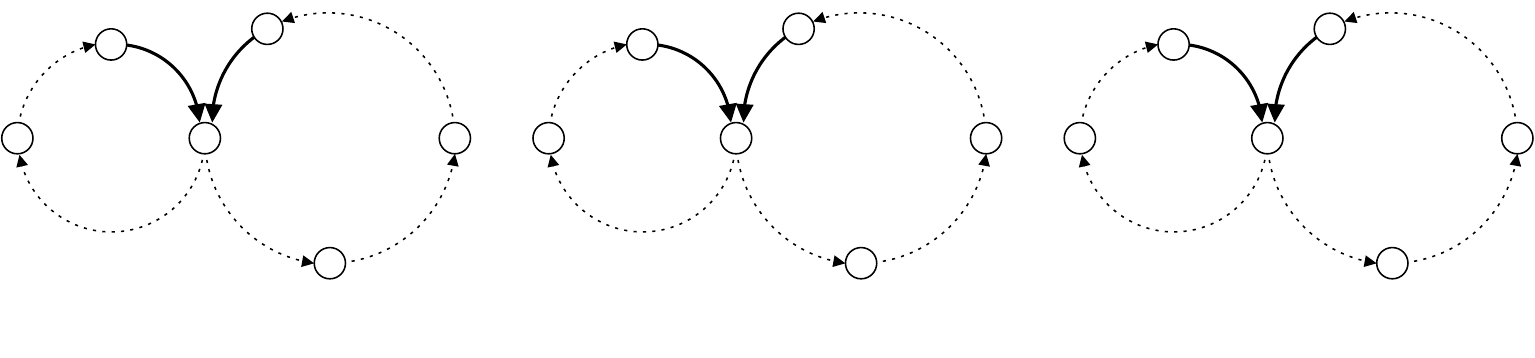_t}}}
	\caption{Interaction graphs of the three kinds of canonical BADCs: $(a)$ a canonical positive BADC, $(b)$ a canonical mixed BADC, and 
		$(c)$ a canonical negative BADC.}
	\label{fig:badc}
\end{figure} 

\begin{theorem}
	\emph{\cite{Robert1986}}~Whatever the updating mode is, if $\mathscr{N}$ does not contain any cycle, then it admits a unique attractor, 
	that is a stable configuration.
\end{theorem}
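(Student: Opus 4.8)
The plan is to use that, $\mathscr{N}$ containing no cycle, its interaction graph $\text{G} = (\text{V}, \text{A})$ is acyclic, hence admits a topological ordering: relabel $\text{V} = \{v_0, \dots, v_{n-1}\}$ so that every arc of $\text{A}$ goes from a smaller to a larger index. Since $(j,i) \notin \text{A}$ means precisely that $\text{f}_i$ does not depend on the $j$-th coordinate, each $\text{f}_{v_k}$ depends only on the coordinates $v_0, \dots, v_{k-1}$ (in particular $\text{f}_{v_0}$ is constant). Define $x^\star \in \BB^n$ by the recursion $x^\star_{v_k} = \text{f}_{v_k}(x^\star)$, which is well posed because the right-hand side involves only coordinates already fixed. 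Then $\text{f}_i(x^\star) = x^\star_i$ for every $i$, so $x^\star$ is a fixed point under every updating mode, and it is the only one, the recursion forcing the value of each coordinate.

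Next I would introduce the monotone quantity driving the convergence: for $x \in \BB^n$ let $m(x)$ be the length of the longest \emph{correct prefix}, that is, the largest $m$ such that $x_{v_j} = x^\star_{v_j}$ for all $j < m$; thus $m(x) = n$ iff $x = x^\star$. The core claim is that along any trajectory $m$ never decreases, and it increases whenever the automaton $v_{m(x)}$ is updated. Indeed, updating $v_j$ with $j < m(x)$ is ineffective, because the in-neighbours of $v_j$ lie among $v_0, \dots, v_{j-1}$, all carrying their $x^\star$-values, whence $\text{f}_{v_j}(x) = x^\star_{v_j} = x_{v_j}$; updating $v_j$ with $j > m(x)$ leaves the prefix $v_0, \dots, v_{m(x)-1}$ untouched; and updating $v_{m(x)}$ gives it the value $\text{f}_{v_{m(x)}}(x) = x^\star_{v_{m(x)}}$ (its in-neighbours being in the correct prefix), lengthening it. The same reasoning applies verbatim when a whole block of automata is updated simultaneously: every updated automaton inside the correct prefix keeps its value, and if $v_{m(x)}$ belongs to the block the prefix grows.

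The conclusion for the asynchronous transition graph $\mathscr{G}$ then follows at once: from any $x$ one reaches $x^\star$ by the finite trajectory that updates $v_{m(x)}, v_{m(x)+1}, \dots, v_{n-1}$ in turn, each step raising $m$ by one; hence every configuration has a trajectory to $x^\star$. Consequently $\{x^\star\}$ --- which has no outgoing effective transition, $x^\star$ being a fixed point --- is a TSCC, and the only one, since any TSCC is closed under the transition relation yet every configuration reaches $x^\star$, forcing such a component to contain, and therefore equal, $\{x^\star\}$. Thus $\mathscr{N}$ admits a unique attractor, the stable configuration $x^\star$. For a general block-sequential or block-parallel mode the same mechanism applies once one invokes the standard fairness hypothesis that each automaton is updated infinitely often: $m$ is non-decreasing and gains at least one each time $v_m$ is updated, so it reaches $n$ after finitely many steps and the orbit settles at $x^\star$.

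The only delicate point I anticipate is the clause ``whatever the updating mode is'': one must fix a precise notion of updating mode and note that the statement does require every automaton to be updated (a mode that starves an automaton would make it fail), after which the topological-order / correct-prefix argument covers all the usual modes uniformly, the combinatorics itself being entirely elementary.
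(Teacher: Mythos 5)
Your argument is correct: the topological-order construction of the unique fixed point $x^\star$, the non-decreasing ``correct prefix'' quantity, and the reachability of $x^\star$ from every configuration (making $\{x^\star\}$ the unique TSCC) together establish the claim, and your caveat about modes that starve an automaton is the right reading of ``whatever the updating mode is.'' Note that the paper itself offers no proof --- it states the theorem as a citation of Robert (1986) --- and your argument is essentially the classical one behind that reference, so there is nothing further to reconcile.
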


\begin{theorem}
	\emph{\cite{Remy2008,Richard2007,Thomas1981}} If $\mathscr{G}$ admits two stable configurations then the interaction graph of 
	$\mathscr{N}$ contains a positive cycle.
\end{theorem}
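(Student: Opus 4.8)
The plan is to argue by contraposition: assuming the (simple, signed) interaction graph $\text{G}$ of $\mathscr{N}$ has no positive cycle, I will show that $\mathscr{N}$ has at most one stable configuration. The first step is to observe that, in the asynchronous transition graph $\mathscr{G}$, the only transitions leaving a configuration $x$ are the $x \trans \bar{x}^i$ with $\text{f}_i(x) \neq x_i$, so a stable configuration is exactly a fixed point of the global map, i.e.\ a configuration $x$ with $\text{f}_i(x) = x_i$ for all $i \in \text{V}$. So suppose, for contradiction, that $x$ and $y$ are two distinct fixed points, and let $Z = \{ i \in \text{V} : x_i \neq y_i \} \neq \emptyset$. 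For each $i \in Z$ record the ``direction'' $\delta_i \in \un$ defined by $\delta_i = +1$ if $x_i = \zero$ and $y_i = \one$, and $\delta_i = -1$ otherwise.

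The core of the proof is the claim that every $i \in Z$ receives, from inside $Z$, an arc $(j,i) \in \text{A}$ that is \emph{consistent}, meaning $\text{sign}(j,i) = \delta_i \delta_j$. To establish it, I would put on $\BB^n$ the partial order obtained from the product order by reversing each coordinate $k$ with $\delta_k = -1$; by construction $x$ and $y$ become respectively the minimum and the maximum of this order on the coordinates of $Z$, so there is a saturated chain $x = w^0 \prec w^1 \prec \cdots \prec w^k = y$ in which each step flips exactly one coordinate of $Z$ in its $\delta$-increasing direction. Following the binary sequence $\text{f}_i(w^0), \text{f}_i(w^1), \ldots, \text{f}_i(w^k)$, which starts at $x_i$ and ends at $y_i \neq x_i$, there is an index $t$ at which it switches in the direction dictated by $\delta_i$; let $j \in Z$ be the coordinate flipped between $w^{t-1}$ and $w^t$, so $w^t = \bar{(w^{t-1})}^{j}$. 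Then $\text{sign}(j,i) = \text{sign}_{w^{t-1}}(j,i) = \text{s}((w^{t-1})_j)\,(\text{f}_i(w^{t-1}) - \text{f}_i(w^t))$ is nonzero (hence $(j,i) \in \text{A}$, and the sign is well defined because $\text{G}$ is simple), and a short case analysis on the four possibilities for $(\delta_i,\delta_j)$ — using that the flip of $j$ was itself $\delta_j$-increasing — shows in each case that this value equals $\delta_i \delta_j$.

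With the claim in hand, restrict $\text{G}$ to $Z$ and to its consistent arcs: every vertex of this subgraph has in-degree at least $1$, so it contains a cycle $i_0 \to i_1 \to \cdots \to i_\ell = i_0$ (possibly a loop). Its sign is $\prod_{t=1}^{\ell} \text{sign}(i_{t-1}, i_t) = \prod_{t=1}^{\ell} \delta_{i_{t-1}}\delta_{i_t}$, which telescopes to $\prod_{t=0}^{\ell-1} \delta_{i_t}^{2} = +1$. This is a positive cycle of $\text{G}$, contradicting the hypothesis; hence $\mathscr{N}$ has at most one stable configuration, which is exactly the contrapositive of the statement.

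The step I expect to be the real obstacle is the middle one. The naive version of the argument — every $i \in Z$ depends on some $j \in Z$, so $Z$ induces a cycle — is easy but useless, because the induced cycle need not be positive; the whole point is to extract a cycle all of whose arcs are consistent, and that forces the introduction of the $\delta$-twisted order and careful sign bookkeeping, matching the paper's conventions for $\text{s}$, for $\bar{x}^j$ and for $\text{sign}_x(j,i)$. Everything else (the reduction to fixed points, the existence of the saturated chain, the telescoping product) is routine.
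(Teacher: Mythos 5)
The paper itself gives no proof of this theorem: it is imported as a known result from \cite{Remy2008,Richard2007,Thomas1981}, so the only thing to assess is your argument, and it is correct in the paper's setting. Stable configurations of $\mathscr{G}$ are indeed exactly the fixed points of the family $(\text{f}_i)_{i \in \text{V}}$. Along your chain $x = w^0 \prec \cdots \prec w^k = y$, which flips each coordinate of $Z$ exactly once in its $\delta$-increasing direction, any index $t$ at which the binary sequence $\text{f}_i(w^0), \ldots, \text{f}_i(w^k)$ switches from $x_i$ to $y_i$ gives $\text{s}(w^{t-1}_j) = -\delta_j$ (coordinate $j$ still carries $x_j$ there) and $\text{f}_i(w^{t-1}) - \text{f}_i(w^t) = -\delta_i$, hence $\text{sign}_{w^{t-1}}(j,i) = \delta_i\delta_j \neq 0$ with $j \in Z$; the four-case check you leave implicit does come out right, the telescoping product around a cycle of the consistent subgraph is $+1$, and a consistent loop $(i,i)$ is itself a positive cycle, so that degenerate case is harmless. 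The one point worth making explicit is where simplicity of $\text{G}$ enters: your construction produces, for each $i$, an arc witnessed at some configuration $w^{t-1}$ that depends on $i$, so the arcs of your cycle are certified at different configurations; it is the paper's standing assumption that $\text{sign}(j,i)$ is constant which turns these into a well-defined positive cycle of the global signed graph. This is exactly the setting used here (BADCs are simple), so your proof covers the statement as stated; the cited results of \cite{Remy2008,Richard2007} are finer (a positive cycle in a local interaction graph, with no simplicity hypothesis) and need a sharper argument. In substance your route is the classical one behind the citation rather than a new approach, and apart from writing out the short sign case analysis it is complete.
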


\begin{theorem}
	\emph{\cite{Remy2008,Richard2010,Thomas1981}} If $\mathscr{G}$ admits a stable oscillation then the interaction graph of $\mathscr{N}$ 
	contains a negative cycle.
\end{theorem}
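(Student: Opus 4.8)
We would prove the contrapositive: \emph{if the interaction graph $G$ of $\mathscr{N}$ contains no negative cycle, then every attractor of $\mathscr{N}$ is a stable configuration.} Since a TSCC of $\mathscr{G}$ has no outgoing transition, it suffices to show that from every $x \in \BB^n$ some trajectory reaches a fixed point of the global map $f = (\text{f}_i)_{i \in V}$, i.e.\ a $y$ with $\text{f}_i(y) = y_i$ for all $i$: if $S$ is a TSCC and $x \in S$, such a trajectory cannot leave $S$, so $S$ contains a fixed point $y$ and hence $S = \{y\}$. Thus the theorem reduces to the implication ``no negative cycle $\Rightarrow$ every configuration can reach a fixed point asynchronously''.

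The first ingredient I would isolate is a lemma on \emph{monotone} maps (order-preserving for the componentwise order on $\BB^n$): if $f$ is monotone, then from any $x$ there is an asynchronous trajectory to a fixed point. I would build it in two phases. \emph{Decreasing phase}: while the current configuration $z$ has a coordinate $i$ with $z_i = \one$ and $\text{f}_i(z) = \zero$, flip it; each flip strictly lowers $z$, so after finitely many steps we reach some $z'$ with $f(z') \geq z'$. \emph{Increasing phase}: while $z \neq f(z)$, pick $i$ with $z_i = \zero$ and $\text{f}_i(z) = \one$ and flip it; monotonicity preserves the invariant $f(z) \geq z$ (after the flip both $z_i$ and $\text{f}_i(z)$ equal $\one$, and no coordinate of $f(z)$ drops), so $z$ strictly increases and the phase terminates exactly at a fixed point. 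This lemma uses no strong connectivity, only monotonicity.

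The second ingredient converts ``no negative cycle'' into monotonicity by a change of variables. I would first show that in a strongly connected digraph with no negative cycle any two directed walks with the same endpoints carry the same sign: two walks of opposite sign from $u$ to $v$, composed with a directed walk from $v$ back to $u$, would give two closed walks of opposite sign at $u$, contradicting that every closed walk splits into directed cycles, all of which are positive. Fixing a base vertex $r$ and setting $\epsilon_i$ to the sign of any directed walk from $r$ to $i$ then yields a labelling $\epsilon : V \to \un$ with $\text{sign}(i,j) = \epsilon_i\epsilon_j$ on every arc. Hence, if a BAN's interaction graph is strongly connected with no negative cycle, the coordinatewise involution $\phi$ negating exactly the coordinates with $\epsilon_i = -1$ is an automorphism of its asynchronous transition graph that conjugates its transition map into one whose interactions are all positive, i.e.\ monotone, so the lemma applies. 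Finally I would peel off a source component $C$ of the condensation of $G$: its automata depend only on $C$, so the lemma drives $C$ to a fixed point of $\text{f}|_C$; once there those automata are frozen in $\mathscr{G}$ (updating them changes nothing, and other updates do not touch them), and substituting their constant values into the remaining local functions yields a BAN on $V \setminus C$ whose interaction graph is a subgraph of $G$ (restriction only deletes interactions) and therefore still has no negative cycle. Iterating freezes every automaton and reaches a fixed point of $f$, concatenating the stages into one trajectory of $\mathscr{G}$ from $x$.

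I expect the two-phase monotone lemma and the bookkeeping of the peeling step (checking that freezing $C$ leaves a well-defined smaller BAN with a restricted interaction graph free of negative cycles, and that $\phi$ commutes with single-automaton updates) to be routine. The one genuinely delicate point is the balancing labelling $\epsilon$: it is precisely the step that turns the architectural hypothesis ``no negative cycle'' into the usable dynamical property ``monotone up to negating some coordinates'', and pinning down its exact statement and scope — strong connectivity is essential there — is where I would take the most care.
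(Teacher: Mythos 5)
The paper itself contains no proof of this statement: it is quoted from the literature (\cite{Remy2008,Richard2010,Thomas1981}) as one of the three classical theorems motivating the focus on cycles, so there is no in-paper argument to compare yours against. Taken on its own terms, your proof is sound and essentially reconstructs the standard route to this result within the paper's framework of constant-sign (``simple'') interaction graphs: the two-phase argument that a monotone network asynchronously reaches a fixed point from any configuration is correct; the switching labelling $\epsilon$ on a strongly connected, negative-cycle-free component (justified by decomposing closed walks into directed cycles) is the right bridge from the architectural hypothesis to monotonicity up to conjugation; and the peeling of source components of the condensation, with freezing and substitution of constants, correctly handles the non-strongly-connected case --- note only that you implicitly use the fact that substitution can delete but never create or re-sign effective interactions, so the ``no negative cycle, constant signs'' hypothesis survives each peeling step, which is worth stating explicitly. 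What your argument buys is a self-contained, elementary proof of the contrapositive (no negative cycle implies every attractor is a stable configuration), which is all this paper uses. What it does not deliver is the cited theorem in its full generality: R{\'e}my et al.\ and Richard prove the statement for arbitrary Boolean networks in which the sign of an interaction may vary with the configuration (working with local interaction graphs), and in that setting your step ``all arcs positive, hence the conjugated network is monotone'' is no longer available as stated. This is a caveat on scope rather than a gap for the present paper, whose standing assumption of simple interaction graphs makes your argument complete.
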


On the basis of the theorems above, and in the same lines as~\cite{Demongeot2012,Noual2012b} that characterises the dynamical behaviour in 
parallel of Boolean automata double-cycles (BADCs), we propose in this paper to study BADCs when updated asynchronously. Informally, a 
\emph{BADC} $\mathscr{D}$ of size $n + m - 1$ is composed of two BACs $\Cone$ (of size $n$) and $\Ctwo$ (of size $m$) that 
intersect tangentially at one automaton that will be denoted specifically, for the sake of clarity in proofs, by $c$ (resp. $\cone_0$, 
$\ctwo_0$) when considering $\mathscr{D}$ (resp. $\Cone$, $\Ctwo$). Notice that in $\mathscr{D}$, every automaton admits a unary function as 
its local transition function that is either $\textsf{id}$ or $\textsf{neg}$, except automaton $c$ that admits a binary function. In this 
paper, we focus on monotone functions and enforce $f_c$ to be the \textsc{and}-function without loss of generality for our concern. Also, 
remark that there exist three different kinds of BADCs: \emph{positive BADCs} made of two positive BACs, \emph{negative BADCs} made of two 
negative BACs, and \emph{mixed BADCs} made of one positive and one negative cycles. An interesting point is that the study of BADCs of size 
$n+m-1$ in general can be reduced to that of three \emph{canonical} BADCs of size $n+m-1$~\cite{Noual2012b,Noual2012}, presented in 
Figure~\ref{fig:badc}, because of the isomorphism between their transition graphs. A canonical positive BADC $\Dp$ is composed only of 
positive interactions. A canonical negative BADC $\Dm$ is composed only of positive interactions, except the two that have $c$ as their 
destination. A canonical mixed BADC $\Dpm$ is composed only of positive interactions, except one of those that have $c$ as their destination 
(we suppose that this interaction belongs to $\Cone$). To finish, for easing the proofs, we denote a BADC configuration $x$ by a vector of 
two binary words, in which the first symbol represents $x_c$. For instance, the null configuration in which all automata are at state $\zero$ 
is denoted by $(\zero^n, \zero^m)$. Also, we denote by $x^\ell$ (resp. $x^r$) the projection of $x$ on cycle $\Cone$ (resp. $\Ctwo$). Thus, 
$x = (x^\ell, x^r)$ and the state of automaton $c^\ell_i$ in configuration $x$ is $x^\ell_i$. Note that $x_0 = x^\ell_0 = x^r_0$ since these 
three notations stand for the state of automaton $\com$ in configuration $x$.

\section{Algorithmic tools}
\label{sec:tools}

In this section, we introduce the tools that will be used further to study the dynamics of BADCs. We introduce first the 
\emph{expressiveness} of a configuration, which counts the number of its $\zero\one$ patterns. This notion is inspired by works on 
asynchronous cellular automata that have shown that the occurrence number of this pattern is crucial to understand their 
behaviour~\cite{Fates2006}. Then are introduced \emph{instructions} to represent sequences of updates as classical algorithms. Instructions 
are used to express long sequences of updates with few lines of code.

\subsection{Expressiveness}

\begin{definition}
\label{def:expressiveness}
Let $x$ be a configuration of a BAC $\mathscr{C}$ of size $n$. The \emph{expressiveness} of $x$ is the number of $\zero\one$ patterns in $x$, 
\emph{i.e.} $|\{i\ |\ 0 \leq i \leq n-1, x_i = 0 \text{ and } x_{i+1 \mod n} = 1\}|$. 
\end{definition}

From Definition~\ref{def:expressiveness}, we derive easily the expressiveness of a configuration $x$ of a BADC $\mathscr{D}$ as the sum of 
the expressivenesses of $x^\ell$ and $x^r$. Expressiveness is very useful to understand the structure of attractors. The least expressive 
configurations are $\allzero$ and $\allone$ and the most expressive ones are $\altone$ and $\altzero$ (if $n$ and $m$ are even). In the 
sequel, we will see that: \emph{(i)} the lowly expressive configurations generally are recurrent and can be reached in linear time 
by most configurations; \emph{(ii)} the highly expressive configurations either are not recurrent or can only be reached through very 
specific update sequences, and they can quickly reach any other configuration. So, for a BADC $\mathscr{D}$ that admits an attractor of 
exponential size made of lowly expressive and highly expressive configurations, we conjecture that: $(1)$ \emph{the shortest path from a 
highly expressive configuration to any other configuration is linear in $n$ and $m$}; $(2)$ \emph{the shortest path from a lowly expressive 
configuration to a highly expressive one is quadratic in $n$ and $m$}. In other terms, to decrease expressiveness is easy whereas to increase
expressiveness is hard.

\subsection{Elementary instructions}

In this article, lots of proofs rely on exhibiting update sequences between two configurations. However, the length of such sequences is 
problematic and a human reader would not manage to extract directly from these sequences the proof general ideas. Thus, we propose to view 
update sequences as \emph{instructions} that allow to define them and understand their effect on configurations easily. 

Let $\mathscr{D}$ be a BADC, $\mathscr{C}$ be one of the BACs of $\mathscr{D}$, $x$ the current configuration of $\mathscr{C}$, and $c_i$ and 
$c_j$ be two automata of $\mathscr{C}$ distinct of $\com$ and such that $i < j$. In the sequel, we will make particular use of the following 
elementary instructions:

\begin{itemize}
\item[$\bullet$] $\sync$: $x_c \leftarrow f_c(x)$ \hfill \# \textsf{\small update of $\com$}\\
	$\sync$ is the only instruction that updates automaton $\com$ and where both BACs interact with each other. This (key)-instruction will 
	always be called when $\com$ can change its state. $\sync$ can be used either to set $\com$ at a desired state or to increase the 
	expressiveness from a configuration. Notice that $\sync$ is the only way to switch a $\one\one\one$ (resp. $\zero\zero\zero$) pattern
	into a $\one\zero\one$ (resp. $\zero\one\zero$) pattern and, thus, to increase the expressiveness. Remark that the BAC
	sub-configurations have to be specific for $\com$ to switch its state.\\[-3mm]
\item[$\bullet$] $\update {c_i}$: $x_{c_i} \leftarrow f_{c_i}(x)$ \hfill \# \textsf{\small update of $c_i$}\\
	$\pupdate$ updates an automaton distinct to $\com$. \\[-3mm]
\item[$\bullet$] $\clock {\mathscr{C}} {i} {j}$: \texttt{for} $k = i$ \texttt{to} $j$ \texttt{do} $\update {c_k}$ \hfill \# \textsf{\small 
	incremental updates}\\
	$\pclock$ updates consecutive automata by increasing order. In fact, $\pclock$ aims at propagating the state of $c_{i-1}$ along 
	$\mathscr{C}$. Notice that if $j<i$ then no automata are updated. Moreover, since $i \neq 0$, $\com$ cannot be updated with $\pclock$.
	\vspace*{-1mm}\begin{property}
	\label{prop:clock}
		Let $x'$ be the result of applying $\clock {\mathscr{C}} {i} {j}$ on configuration $x$. Then we have: $\forall k \in 
		\{i, \ldots, j\},\ x'_k = x_{i-1}$ and $\forall k \notin \{i,\ldots, j\},\ x'_k = x_k$.
	\end{property}\vspace*{-1mm}
\item[$\bullet$] $\erase {\mathscr{C}}$: $\clock {\mathscr{C}} {1} {\size {\mathscr{C}} -1}$\\
	$\perase$ is a particular case of $\pclock$ that aims at propagating the state of $c_0$ along $\mathscr{C}$. As a consequence, using this 
	instruction on $\mathscr{C}$ makes it to be of expressiveness $0$, and thus, is really efficient to converge quickly to a stable 
	configuration of least expressiveness (if should be the case).  
	\vspace*{-1mm}\begin{property}
	\label{prop:erase}
		Let $x'$ be the result of applying $\erase {\mathscr{C}}$ on configuration $x$. Then we have: $\forall k \in \{0, \ldots, \size 
		{\mathscr{C}} -1\},\ x'_k = x_{0}$.
	\end{property}\vspace*{-1mm}
\item[$\bullet$] $\expand {\mathscr{C}}$: $\clock {\mathscr{C}} {1} {\kappa -1 \in \mathbb{N}}$ with\\ 
	$\kappa = \underset{1 \leq k \leq \size {\mathscr{C} -1}}{\min} \left\lbrace k \ |\ \begin{cases}
		(x_k = 0) \text{ and } (x_{k+1 \mod \size {\mathscr{C}}} = 1) & \text{if } x_c = 1\\
		(x_k = 1) \text{ and } (x_{k+1 \mod \size {\mathscr{C}}} = 0) & \text{if } x_c = 0
	\end{cases} \right\rbrace$.\\
	$\pexpand$ is another particular case of $\pclock$ that aims at propagating the state of $c_0$ along $\mathscr{C}$ while neither 
	$\zero\one$ nor $\one\zero$ patterns are destroyed, which avoids decreasing the expressiveness of $\mathscr{C}$. 
\item[$\bullet$] $\counter {\mathscr{C}} {i} {j}$: \texttt{for} $k = j$ \texttt{down to} $i$ \texttt{do} $\update {c_k}$ \hfill \# 
	\textsf{\small decremental updates}\\
	$\pcounter$ updates consecutive automata by decreasing order. Once $\counter {\mathscr{C}} {i} {j}$ executed, the information of $c_j$ is 
	lost and that of $c_{i-1}$ is possessed by both $c_{i-1}$ and $c_i$. In fact, $\pcounter$ aims at shifting partially a BAC section. As 
	for $\pclock$, if $j<i$ then no automata are updated and $\com$ cannot be updated with $\pcounter$.
	\vspace*{-1mm}\begin{property}
	\label{prop:counter}	
		Let $x'$ be the result of applying $\counter {\mathscr{C}} {i} {j}$ on configuration $x$. Then we have: $\forall k \in \{i, \ldots, 
		j\},\ x'_k = x_{k-1}$ and $\forall k \notin \{i, \ldots, j\},\ x'_k = x_k$.
	\end{property}
\item[$\bullet$] $\shift {\mathscr{C}}$: $\counter {\mathscr{C}} 1 {\size {\mathscr{C}} -1}$\\
	$\pshift$ is a particular case of $\pcounter$. Once executed, every automaton of $\mathscr{C}$ takes the state of its predecessor, except 
	$\com$ whose state does not change. Automaton $c_{\size {\mathscr{C} -1}}$ excluded, all the information contained along $\mathscr{C}$ is 
	kept safe. To use $\pshift$ is useful to propagate information along a BAC without loosing too much expressiveness (at most one 
	$\zero\one$ pattern is destroyed).
\end{itemize}

\section{Results}
\label{sec:res}

\subsection{More complex instructions}
\label{sec:comp_inst}

Now, consider a configuration $x$ of BADC $\mathscr{D}$ and an algorithm made of instructions that defines a sequence of updates (abbreviated 
simply by ``sequence'' from now) from $x$, denoted by $\seq{x}$. Abusing language, in the sequel, $\seq{x}$ represents both the underlying 
sequence and its result, namely the configuration resulting from the execution of $\seq{x}$. To end this section, we introduce three other 
sequences in Table~\ref{tab:algo0}, more complex, that will be important later. In particular, Lemma~\ref{lem:copy} states that $\pcop$ 
allows to transform $x$ into $x'$ if $x$ is expressive enough (highly expressive actually). 

\begin{table}[t!]
	\caption{The sequences $\pcopc$, $\pcop$ and $\pcopp$.}
	\label{tab:algo0}
	\scalebox{.9}{\hspace*{2mm}\centerline{
		\begin{minipage}{.52\textwidth}
			\centerline{
				\begin{tabular}{m{\textwidth}}
					\fbox{$\copc {x} {x'} {\mathscr{C}^s}$}\\
					01.~ $n \leftarrow \size {\mathscr{C}^s}$;\\
					02.~ \textbf{if} ($x^s_{n-1} = x^s_{n-2} \text{ and } x^s_{n-1} \neq x'^s_{n-1}$) \textbf{then}\\
					03.~ ~~~$j \leftarrow \max \{k\ |\ k < n-1 \text{ and } x^s_k \neq x'^s_k\}$;\\
					04.~ \textbf{else} $j \leftarrow n$;\\
					05.~ \textbf{end if}\\
					06.~ \textbf{for} ($k = n-1$) \textbf{down to} ($j + 1$) \textbf{do}\\
					07.~ ~~~$\update {c^s_{k-1}}$;\\
					08.~ ~~~$\update{c^s_{k}}$;\\
					09.~ \textbf{done}\\
					10.~ \textbf{for} ($k = j-1$) \textbf{down to} ($1$) \textbf{do}\\
					11.~ ~~~\textbf{if} ($x^s_k \neq x'^s_k$) \textbf{then} $\update{c^s_k}$;\\
					12.~ ~~~\textbf{end if}\\
					13.~ \textbf{done}
				\end{tabular}
			}
		\end{minipage}
		\hspace*{5mm}
		\begin{minipage}{.23\textwidth}
			\centerline{
				\begin{tabular}{m{\textwidth}}
					\fbox{$\cop {x} {x'}$}\\
					01.~ $\copc {x} {x'} {\Cone}$;\\
					02.~ $\copc {x} {x'} {\Ctwo}$;
				\end{tabular}
			}
			\vspace*{5mm}
			\centerline{
				\begin{tabular}{m{\textwidth}}
					\fbox{$\copp {x} {x'}$}\\
					01.~ \textbf{if} ($\com \neq \com'$) \textbf{then}\\
					02.~ ~~~$\shift{\Cone}$;\\
					03.~ ~~~$\shift{\Ctwo}$;\\
					04.~ ~~~$\sync$;\\
					05.~ \textbf{end if}\\
					06.~ $\cop {x} {x'}$;
				\end{tabular}
			}
		\end{minipage}
	}}
\end{table}	

\begin{lemma}
	\label{lem:copy}
	Let $\mathscr{D}$ be a BADC and $x$ and $x'$ two of its configurations such that $x_0 = x'_0$. If, for any $s \in \{\ell, r\}$, one of 
	the following properties holds for $x$:
	\begin{enumerate}
	\item $\forall i \in \{1, \ldots, \size{\mathscr{C}^s} -1\},\ x^s_i \neq x^s_{i-1}$,
	\item $\forall i \in \{1, \ldots, \size{\mathscr{C}^s} -2\},\ x^s_i \neq x^s_{i-1} \text{ and } x^s_{\size{\mathscr{C}^s} -1} = 
		x'^s_{\size{\mathscr{C}^s} -1}$,
	\item $\forall i \in \{1, \ldots, \size{\mathscr{C}^s} -2\},\ x^s_i \neq x^s_{i-1} \text{ and } \exists p \in \{1, \ldots, 
		\size{\mathscr{C}^s} -2\}, x^s_p \neq x'^s_p$,
	\end{enumerate}
	then $\cop {x} {x'} = x'$ and this sequence consists in at most $2(n + m - 6)$ updates.
\end{lemma}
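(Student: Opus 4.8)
The plan is to prove the lemma for the canonical form of $\mathscr{D}$, so that for every $k\ge1$ the local function of $c^s_k$ propagates its predecessor: $f_{c^s_k}(y)=y^s_{k-1}$. The one fact used throughout is then that, whenever $y^s_{k-1}=\neg y^s_k$, the instruction $\update{c^s_k}$ flips $y^s_k$, and that $\pcopc$ never updates $\com$. Since $\cop{x}{x'}$ runs $\copc{x}{x'}{\Cone}$ and then $\copc{x}{x'}{\Ctwo}$ on automaton sets meeting only in $\com$ (which neither run touches), it suffices to show, for each $s\in\{\ell,r\}$, writing $N=\size{\mathscr{C}^s}$, and under each of the three hypotheses, that $\copc{x}{x'}{\mathscr{C}^s}$ turns $x^s$ into $x'^s$ while keeping $x_0=x'_0$ fixed; the update count is then a separate, elementary matter.

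First I would dispose of lines 10--13 together with the branch $j=N$. Under hypothesis (1) the word $x^s_0\cdots x^s_{N-1}$ alternates; hypothesis (2) falls into the \texttt{else} branch ($j=N$) with alternation up to index $N-2$ and $x^s_{N-1}=x'^s_{N-1}$; in both cases lines 06--09 do nothing. Running the loop of lines 10--13 with $k$ decreasing from $N-1$ to $1$, each $\update{c^s_k}$ reads the still-original, hence opposite, value $x^s_{k-1}$ and therefore flips $c^s_k$; doing this exactly when $x^s_k\neq x'^s_k$ (never at $k=N-1$ under hypothesis (2), where the two agree) brings every $c^s_k$, $1\le k\le N-1$, to $x'^s_k$ while $\com$ stays at $x'_0$. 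This also covers hypothesis (3) whenever $x^s_{N-1}\neq x^s_{N-2}$ or $x^s_{N-1}=x'^s_{N-1}$, since then $j=N$ again.

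The heart of the proof is the remaining case: hypothesis (3) with $x^s_{N-1}=x^s_{N-2}$ and $x^s_{N-1}\neq x'^s_{N-1}$, so that the ``$\exists p$'' clause makes $j=\max\{k<N-1 : x^s_k\neq x'^s_k\}$ well defined with $1\le j\le N-2$, $x^s_j\neq x'^s_j$, and $x^s_k=x'^s_k$ for $j<k<N-1$. Writing $a_i$ for the original $x^s_i$, I would show by induction on the iterations of the loop on lines 06--09 (run for $k$ from $N-1$ down to $j+1$) that when iteration $k$ begins, every automaton of index $<k$ still holds its original value; hence line 07 sets $c^s_{k-1}\leftarrow a_{k-2}$ and line 08 then sets $c^s_k\leftarrow a_{k-2}$. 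Using alternation on $0,\dots,N-2$ (so $a_{k-2}=a_k$ for $2\le k\le N-2$) together with $a_{N-1}=a_{N-2}=\neg a_{N-3}$, one reads off the configuration after the loop: $c^s_{N-1}$ has been flipped exactly once, to $\neg a_{N-1}=x'^s_{N-1}$; $c^s_j$ has been flipped exactly once, to $\neg a_j=x'^s_j$; each $c^s_k$ with $j<k<N-1$ has been flipped exactly twice, once ``down'' by iteration $k+1$ and once ``back'' by iteration $k$, so it still equals $a_k=x'^s_k$; and the automata of index $<j$ are untouched and still alternating. Lines 10--13 then act on $k=j-1,\dots,1$ exactly as in the previous paragraph, fixing $c^s_1,\dots,c^s_{j-1}$; combined with $c^s_j,\dots,c^s_{N-1}$ already correct and $\com=x_0=x'_0$, the cycle $\mathscr{C}^s$ now equals $x'^s$.

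For the bound, on $\mathscr{C}^s$ the loop of lines 06--09 performs $2(N-j-1)$ updates (none when $j=N$) and the loop of lines 10--13 at most $j-1$, each a genuine one-automaton transition avoiding $\com$; summing the two loops and then over $s\in\{\ell,r\}$ yields the stated bound $2(n+m-6)$. I expect the main obstacle to be exactly the inductive analysis of lines 06--09 in this last case: the parity bookkeeping must be carried through carefully, so that the ``defect'' introduced at index $N-1$ by $x^s_{N-1}=x^s_{N-2}$ is transported down to the top-most wrong index $j$, every index strictly between being flipped an even number of times so that the already-correct block $j{+}1,\dots,N{-}2$ is restored rather than corrupted. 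Everything else is immediate from the alternation hypotheses and the semantics of $\pupdate$ (cf.\ Property~\ref{prop:counter}).
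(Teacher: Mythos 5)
Your argument is correct and takes essentially the same route as the paper's own proof: the same case split between the $j=N$ branch (Properties~1--2, and Property~3 when the test of line~2 fails), handled by the decreasing loop of lines~10--13 reading still-original predecessors, and the same invariant-style analysis of lines~06--09 showing that the defect created at index $N-1$ is carried down to the topmost mismatch $j$, every intermediate automaton being flipped twice and $c^s_j$, $c^s_{N-1}$ once. The only discrepancy is the final count: your own figures, $2(N-j-1)+(j-1)\leq 2(N-2)$ per cycle, sum to $2(n+m-4)$ rather than the stated $2(n+m-6)$ (the worst case $j=1$ on both cycles really does cost $2(n-2)+2(m-2)$ effective updates), but the paper's tally of $2(n-2)-2+2(m-2)-2$ is no better justified, so this is a shared arithmetic looseness in the constant rather than a gap in your reasoning.
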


\begin{proof}
	Remark that $\sync$ is never called in $\pcop$. Thus, the state of $c$ never switches and $x_0 = x'_0$. Since $\pcop$ calls twice 
	$\pcopc$, once on $\Cone$ and then on $\Ctwo$, let us focus without loss of generality on $\copc {x} {x'} {\Cone}$ and prove that this 
	sequence transforms $x^\ell$ into $x'^\ell$ (the same kind of reasoning adapts directly to $\copc {x} {x'} {\Ctwo}$).
	
	First, it is important to notice that, if $x^\ell$ follows either Property~1 or Property~2, which both induce that the value of $j$ is 
	initialised to $n$, the only $\texttt{for}$-loop that can be executed is that of line 10. Now, the assumption stating that $\forall i \in 
	\{1, \ldots, \size{\mathscr{C}^s} -1\},\ x^s_i \neq x^s_{i-1}$ together with lines 11-13 make $x^\ell$ to become $x'^\ell$.
	
	Second, let us focus on a configuration $x^\ell$ for which Property~3 holds but not Properties~1 and~2. Such an $x^\ell$ necessarily 
	verifies conditions given in line 2, which leads $j$ to be well defined since, by hypothesis, $\exists p \in \{1, \ldots, \size{\Cone} 
	-2\}, x^\ell_p \neq x'^\ell_p$ (notice that $j$ is set to the greatest $p$ satisfying this relation). As a consequence, the content of 
	the $\texttt{for}$-loop of line 6 is executed. Let us now prove that, at the end of the execution of this loop, $\forall j < k < n-1,\ 
	x^\ell_k = x'^\ell_k$. From this, consider the following loop invariant \emph{\bf inv($k$)}: ``at the beginning of the $k$-th iteration, 
	$x^\ell_{k-1} = x^\ell_k$ and $x^\ell_{k-1} \neq x^\ell_{k-2}$.''
	
	For the $(n-1)$-th iteration, from above, the invariant holds.

	Assume that the invariant still holds at the $k$-th iteration. Given that $x^\ell_{k-1} \neq x^\ell_{k-2}$, line 7 makes $c^\ell_{k-1}$ 
	switch its state that consequently becomes \emph{(i)} different from that of $c^\ell_k$ and \emph{(ii)} equal to that of $c^\ell_{k-2}$. 
	Then, because of \emph{(i)}, line 8 makes $c^\ell_k$ switch. Notice that, at this point, the states of $c^\ell_{k-2}$ and $c^\ell_{k-3}$ 
	have not been changed and $x^\ell_{k-2} \neq x^\ell_{k-3}$. Thus, with \emph{(ii)}, the invariant still holds for the $(k-1)$-th 
	iteration.
	
	According to what has just been explained, at the end of the loop, every automaton $c^\ell_k$, $j < k < n-1$ has switched twice (and thus 
	has recovered its initial state) whereas automata $c^\ell_j$ and $c^\ell_{n-1}$ have switched once (and thus do have changed their 
	state). As a consequence, we now have that $x^\ell_{n-1} = x^\ell_{n-1}$ and $x^\ell_j = x'^\ell_j$. All this ensures that at line 9, 
	$\forall j \leq k \leq n-1,\ x^\ell_k = x'^\ell_k$.
	
	For ending the proof, with $0 \leq k \leq j-1$, it suffices to follow the $\texttt{for}$-loop of line 10 whose effect has been explained 
	in the previous paragraph. Also, we have just seen that in $\Cone$, $\pcopc$ can lead $(n-2)$ automata (except $c_j$ and $c_{n-1}$ as 
	said before) to switch twice in the worst case, \emph{i.e.} when $j$ is initialised to $1$.\smallskip 

	As a consequence, the execution of $\pcop$ takes at most $2(n-2)-2 + 2(m-2)-2 = 2(n + m - 6)$ updates.\qed
\end{proof}

From this first result that gives strong insights about the power of instructions and sequences to reveal possible trajectories between configurations, let us now focus on the dynamical behaviours of double-cycles.

\subsection{Positive BADCs}
\label{sec:pbadc}

Since results of~\cite{Noual2012b,Noual2012} have shown that positive BADCs behave as positive BACs, and because stable configurations 
are conserved between distinct updating modes~\cite{Goles1990}, it is easy to show that the asymptotic dynamics of positive BADCs consists in 
two stable configurations $x$ and $\bar{x}$ (where $\bar{x}$ denotes the negation of $x$). In the case of canonical BADCs, these stable 
configurations are $\allzero$ and $\allone$. Here, let us focus on an arbitrary positive BADC $\Dp$. We show that two new sequences $\pconvn$ 
and $\pconvp$ (cf. Table~\ref{tab:algo1}) can respectively transform any configuration with at least one automaton at state $\zero$ into 
$\allzero$, and any configuration with at least one automaton at state $\one$ in both cycles into $\allone$.

\begin{table}[t!]
	\caption{The sequences $\pconvn$ and $\pconvp$.}
	\label{tab:algo1}
	\scalebox{.9}{\hspace*{6mm}\centerline{
		\begin{minipage}{.34\textwidth}
			\centerline{
				\begin{tabular}{m{\textwidth}}
					\fbox{$\convn {x}$}\\
					01.~ \textbf{if} ($x_0$ = 1) \textbf{then}\\
					02.~ ~~~$i \leftarrow \min \{k\ |\ x^\ell_k = 0\}$;\\
					03.~ ~~~$\clock {\Cone} {i+1} {n-1}$;\\
					04.~ ~~~$\sync$;\\
					05.~ \textbf{end if}\\
					06.~ $\erase {\Cone}$;\\
					07.~ $\erase {\Ctwo}$;
				\end{tabular}
			}
		\end{minipage}
		\hspace*{5mm}
		\begin{minipage}{.35\textwidth}
			\centerline{
				\begin{tabular}{m{\textwidth}}
					\fbox{$\convp {x}$}\\
					01.~ \textbf{if} ($x_0$ = 0) \textbf{then}\\
					02.~ ~~~$i \leftarrow \min \{k\ |\ x^\ell_k = 1\}$;\\
					03.~ ~~~$\clock {\Cone} {i+1} {n-1}$;\\
					04.~ ~~~$j \leftarrow \min \{k\ |\ x^r_k = 1\}$;\\
					05.~ ~~~$\clock {\Ctwo} {j+1} {m-1}$;\\
					06.~ ~~~$\sync$;\\
					07.~ \textbf{end if}\\
					08.~ $\erase {\Cone}$;\\
					09.~ $\erase {\Ctwo}$;
				\end{tabular}
			}
		\end{minipage}
	}}\vspace*{-3mm}
\end{table}

\begin{theorem}
	\label{thm:pbadc}
	Let $\Dp$ be a canonical positive BADC and $x$ one of its unstable configuration. If $x$ admits one automaton at state $0$, then $\convn 
	{x} = \allzero$. Also, if $x$ admits one automaton at state $1$ in both its cycles, then $\convp {x} = \allone$. The convergence time of 
	$\Dp$ is at most $2(n + m) - 5$. 
\end{theorem}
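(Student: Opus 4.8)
The plan is to exploit the fact, recalled at the start of Section~\ref{sec:pbadc}, that the attractors of a canonical positive BADC are exactly the two stable configurations \allzero{} and \allone{}. It then suffices to establish that, from any unstable $x$: $(i)$ $\pconvn$ reaches \allzero{} when $x$ carries an automaton at state $0$; $(ii)$ $\pconvp$ reaches \allone{} when $x$ carries an automaton at state $1$ in each cycle; $(iii)$ both sequences consist of at most $2(n+m)-5$ updates. Since every elementary instruction is, by construction, a single asynchronous transition of \Dp{}, items $(i)$ and $(iii)$ will then give the announced bound on the convergence time.

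For $(i)$ I would argue by cases on $x_0$. If $x_0=0$, only lines 06--07 of $\pconvn$ are executed and, by Property~\ref{prop:erase}, $\erase{\Cone}$ then $\erase{\Ctwo}$ overwrite every automaton of each cycle with the state $x_0=0$ (note $\perase$ never updates \com{}, so $x_0$ is still $0$ when the second call begins); hence the output is \allzero{}. If $x_0=1$, then, the two cycles of \Dp{} playing symmetric roles, I may assume the prescribed automaton at state $0$ lies in \Cone{}, so line 02 is well defined with $i\ge 1$ (since $x^\ell_0=x_0=1$); by Property~\ref{prop:clock}, line 03 sets $x^\ell_k=x^\ell_i=0$ for all $i<k\le n-1$, whence $x^\ell_{n-1}=0$, and $\sync$ then performs $x_c\leftarrow f_c(x)=x^\ell_{n-1}\wedge x^r_{m-1}=0$, reducing to the case $x_0=0$. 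Item $(ii)$ is the mirror statement: if $x_0=1$ only lines 08--09 of $\pconvp$ run, and the two $\perase$ calls already give \allone{}; if $x_0=0$, the hypothesis that each cycle carries a $1$ makes $i$ (line 02) and $j$ (line 04) well defined with $i,j\ge 1$, lines 03 and 05 force $x^\ell_{n-1}=x^r_{m-1}=1$ by Property~\ref{prop:clock}, $\sync$ performs $x_c\leftarrow 1\wedge 1=1$, and lines 08--09 propagate this $1$ everywhere by Property~\ref{prop:erase}.

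For $(iii)$ I would count updates using that $\clock{\mathscr{C}}{a}{b}$ costs $\max(0,b-a+1)$ updates and $\erase{\mathscr{C}}$ costs $\size{\mathscr{C}}-1$ updates. Along $\pconvp$ with $x_0=0$ the total is at most $(n-1-i)+(m-1-j)+1+(n-1)+(m-1)\le 2(n+m)-5$ (using $i,j\ge 1$), and $n+m-2$ when $x_0=1$; along $\pconvn$ the total is at most $(n-1-i)+1+(n-1)+(m-1)=2n+m-3$ when $x_0=1$ (its mirrored version costing $n+2m-3$), and $n+m-2$ when $x_0=0$. For cycles of size at least $2$, all these quantities are $\le 2(n+m)-5$. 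Finally, every unstable configuration carries an automaton at state $0$ (the only configuration without one is \allone{}, which is stable), so by $(i)$ it is driven to \allzero{} in at most $2(n+m)-5$ updates; hence the convergence time of \Dp{} is at most $2(n+m)-5$.

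The step I expect to be most delicate is the bookkeeping around \com{}: one must check that $\sync$ is invoked exactly when $c^\ell_{n-1}$ and $c^r_{m-1}$ already hold the values that force the \textsc{and}-function to flip \com{} to the target value, and that the subsequent $\perase$ calls propagate the \emph{updated} value of $x_0$ rather than the original one. The only further subtlety is that $\pconvn$ as written scans only \Cone{} for a $0$; this is harmless because the two cycles of a canonical positive BADC are interchangeable, and the mirrored update count stays within $2(n+m)-5$ too.
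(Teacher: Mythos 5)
Your proposal is correct and follows essentially the same route as the paper's proof: a case split on $x_0$, reducing the hard case to the easy one via $\pclock$ followed by $\sync$ (using that $f_c$ is the \textsc{and} of $x^\ell_{n-1}$ and $x^r_{m-1}$), then two $\perase$ calls, with the same ``without loss of generality'' treatment of which cycle contains the required $0$ for $\pconvn$ and the same update counts ($2n+m-3$ and $2(n+m)-5$). Your explicit closing observation that every unstable configuration contains a $0$ (so $\pconvn$ alone bounds the convergence time) is a small, welcome tightening of what the paper leaves implicit, but it is not a different approach.
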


\begin{proof}
	Let us focus on the case of $\pconvp$ and consider an unstable configuration $x$ of $\Dp$ with at least one $\one$ in both cycles. First, 
	if $c$ is at state $1$, $\perase$ of lines 8 and 9 make every automaton of $\Cone$ and $\Ctwo$ to take state $\one$ and the obtained 
	configuration is then $\allone$ which is stable.
	
	Second, consider that $c$ is at state $0$. So, instructions of lines 2-6 are executed. Since, by hypothesis, there is at least a $\one$ 
	in $\Cone$ and $\Ctwo$, after the execution of incUp at line 3, $\forall k \in \{i, \ldots, n-1\},\ x^\ell_k = 1$, and, after the 
	execution of incUp at line 5, $\forall k \in \{j, \ldots, m-1\},\ x^r_k = 1$. As a consequence, the effect of $\sync$ at line 6 is to fix 
	$c$ at state $\one$ and we get back to the case above.
	
	Now, notice that the case of $\pconvn$ is very similar, by considering with no loss of generality that at least one 
	automaton is at state $0$ in $\Cone$ and that we need to set $x^\ell_{n-1}$ to $\zero$ before the execution of $\sync$ at line 
	4.\smallskip
	
	Finally, notice that the number of effective updates made by $\pconvn$ (resp. $\pconvp$) is at most $2n + m - 3$ (resp. $ = 2(n + m) - 
	5$).\qed 
\end{proof}

\subsection{Mixed BADCs}
\label{sec:mbadc}

Now, we pay attention to mixed BADCs. From the same works that showed also that asynchronism keeps only recurrent configurations of least 
global instability, we know that their asymptotic dynamics consists only in a stable configuration. In particular, the attractor of canonical 
mixed BADCs is $\allzero$. Let us focus on their convergence time. To do so, we will make particular use of new sequence $\psimp$ (cf. 
Table~\ref{tab:algo2}) that gives a way of converging to this stable configuration from any initial configuration $x$, by reducing 
progressively its expressiveness.

\begin{theorem}
	\label{thm:mbadc}
	Let $\Dpm$ be a canonical mixed BADC. For any configuration $x$ of $\Dpm$, $\simp {x} = \allzero$ holds. The convergence time of $\Dpm$ 
	is at most $2n + m - 2$. 
\end{theorem}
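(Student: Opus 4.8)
The plan is to follow the pattern of Theorem~\ref{thm:pbadc}: to show that $\psimp$ turns every configuration $x$ of $\Dpm$ into $\allzero$ along a trajectory of at most $2n+m-2$ updates. Since, by the works cited at the beginning of this subsection, $\allzero$ is the unique attractor of $\Dpm$ — and one checks this directly: the negative interaction being $(c^\ell_{n-1},\com)$, the function $f_c$ reads $\neg x^\ell_{n-1}\wedge x^r_{m-1}$, which at $\allzero$ equals $\neg\zero\wedge\zero=\zero$, while every other automaton applies an identity local function to a $\zero$, so $\allzero$ is stable — such a bound on the length of $\simp{x}$ is exactly a bound on the convergence time of $\Dpm$.

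First I would split on $x_0$, the state of $\com$. If $x_0=\zero$, the calls $\erase{\Cone}$ and $\erase{\Ctwo}$ propagate the $\zero$ held by $\com$ along both cycles (Property~\ref{prop:erase}), producing $\allzero$ in $(n-1)+(m-1)=n+m-2$ updates, none of which touches $\com$. If $x_0=\one$, the only way to reach $\allzero$ is to flip $\com$ with $\sync$, which requires $f_c(x)=\neg x^\ell_{n-1}\wedge x^r_{m-1}=\zero$, that is, $x^\ell_{n-1}=\one$ or $x^r_{m-1}=\zero$. The key observation — and the place where the negative interaction inside $\Cone$ enters — is that $\com$ itself carries a $\one$ in $\Cone$ when $x_0=\one$, so a single $\pclock$ on $\Cone$ propagating that $\one$ forces $x^\ell_{n-1}=\one$ (in the extreme case where the only $\one$ of $\Cone$ is $\com$ itself, this call degenerates to $\erase{\Cone}$), whence $f_c(x)=\zero$; then $\sync$ resets $\com$ to $\zero$ and we are back in the $x_0=\zero$ situation, finished with $\erase{\Cone}$ then $\erase{\Ctwo}$.

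Correctness of each block follows from Properties~\ref{prop:clock} and~\ref{prop:erase} together with the definition of $\sync$; the small points to check are that no $\pclock$ or $\perase$ call ever updates $\com$ (they all start at an index $\geq 1$) and that, in the $x_0=\one$ branch, the $\sync$ call is enabled and does switch $\com$ off. For the length, the $x_0=\one$ branch dominates: the pre-$\sync$ $\pclock$ on $\Cone$ costs at most $n-1$ updates, $\sync$ costs $1$, and the concluding $\erase{\Cone}$ and $\erase{\Ctwo}$ cost $n-1$ and $m-1$, for a total of at most $2(n-1)+1+(m-1)=2n+m-2$, which exceeds the $x_0=\zero$ figure $n+m-2$.

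The step I expect to be the main obstacle is precisely this counting, in particular explaining why the constant is $2n+m-2$ rather than the $2n+m-3$ obtained for $\pconvn$ in Theorem~\ref{thm:pbadc}: there, when $x_0=\one$, the first $\zero$ of $\Cone$ necessarily sits at an index $\geq 1$, so the pre-$\sync$ propagation costs at most $n-2$; here the $\one$ that $\sync$ requires in $\Cone$ may be available only at $\com$, forcing the full $n-1$. One then has to rule out that any configuration needs still more work — notably, that when $\Ctwo$ is ``locked'' at the all-$\one$ word (so $x^r_{m-1}$ cannot be driven to $\zero$ by propagation inside $\Ctwo$) routing the whole correction through $\Cone$ still stays within $2n+m-2$, and that once $\Cone$ has been filled with $\one$'s the concluding $\erase{\Cone}$ is a legitimate, $\com$-free propagation of the fresh $\zero$ at $\com$.
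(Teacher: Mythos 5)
Your proof is correct and follows the paper's argument essentially verbatim: the same case split on $x_0$, with line~2's $\erase{\Cone}$ (which is exactly the propagation of $\com$'s $\one$ you describe) forcing $x^\ell_{n-1}=\one$ so that $\sync$ switches $\com$ to $\zero$, followed by the two concluding $\perase$ calls, and the same worst-case count $2(n-1)+1+(m-1)=2n+m-2$. The only additions are cosmetic (your direct check that $\allzero$ is stable, which the paper takes from the cited works, and the side comparison with the $\pconvn$ bound).
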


\begin{proof}
	First, if $c$ is at state $0$, $\perase$ of lines 5 and 6 make every automaton of $\Cone$ and $\Ctwo$ to take state $\zero$ and the 
	stable configuration $\allzero$ is obtained.
	
	Second, consider that $c$ is at state $1$. Instructions of lines 2 and 3 are thus executed. So, $\perase$ makes every automaton of 
	$\Cone$ take state $1$, and $\sync$ makes $c$ take state $0$. And we get back to the case above.\smallskip
	
	Finally, notice that the number of effective updates made by $\psimp$ is at most $2n + m - 2$.\qed 
\end{proof}

\subsection{Negative BADCs}
\label{sec:nbadc}

In this section, we interest in negative BADCs. Contrary to BADCs of other sorts, the previous results 
of~\cite{Demongeot2012,Noual2012b,Noual2012} obtained under the parallel updating mode are not helpful for dealing with the 
asynchronous updating mode. Indeed, in parallel, negative BADCs admit an exponential number of attractors. In our asynchronous framework, we 
will show that they admit a unique stable oscillation of exponential size that depends on the parity of underlying cycles. In particular, the 
study that follows is divided in two axes: the first one deals with BADCs made of two negative cycles of even sizes (abbreviated by $\Dm_e$), 
the second one with the others where at least one cycle of odd size (abbreviated by $\Dm_o$).

\begin{table}[t!]
	\caption{The sequences $\psimp$, $\pcompa$ and $\pcompa$.}
	\label{tab:algo2}
	\scalebox{.9}{\hspace*{16mm}\centerline{
		\begin{minipage}{.24\textwidth}
			\centerline{
				\begin{tabular}{m{\textwidth}}
				\fbox{$\simp {x}$}\\
				01.~ \textbf{if} ($x_0$ = 1) \textbf{then}\\
				02.~ ~~~$\erase {\Cone}$;\\
				03.~ ~~~$\sync$;\\
				04.~ \textbf{end if}\\
				05.~ $\erase {\Cone}$;\\
				06.~ $\erase {\Ctwo}$;
				\end{tabular}
			}
		\end{minipage}
		\hspace*{5mm}
		\begin{minipage}{.33\textwidth}
			\centerline{
				\begin{tabular}{m{\textwidth}}
				\fbox{$\compa {x}$}\\
				01.~ \textbf{for} ($i = 1$) \textbf{to} ($n-1$) \textbf{do}\\
				02.~ ~~~ $\sync$;\\
				03.~ ~~~ $\expand {\Cone}$;\\
				04.~ ~~~ $\erase {\Ctwo}$;\\
				05.~ \textbf{done}
				\end{tabular}
			}
		\end{minipage}
		\hspace*{5mm}
		\begin{minipage}{.52\textwidth}
			\centerline{
				\begin{tabular}{m{\textwidth}}
				\fbox{$\compb {x}$}\\
				01.~ \textbf{if} ($x^r= \one^m$) \textbf{then}\\
				02.~ ~~~$\sync$;\\
				03.~ ~~~$\erase {\Ctwo}$;\\
				04.~ \textbf{end if}\\
				04.~ $\sync$;\\
				06.~ $\expand {\Ctwo}$;\\
				07.~ \textbf{for} ($i = 1$) \textbf{to} ($m-2$) \textbf{do}\\
				08.~ ~~~$\shift {\Cone}$;\\
				09.~ ~~~$\sync$;\\
				10.~ ~~~$\expand {\Ctwo}$;\\
				11.~ \textbf{done}
				\end{tabular}
			}
		\end{minipage}
	}}\vspace*{-3mm}
\end{table}	

\vspace*{-3mm}\subsubsection{Both cycles are even}

Here, we show that any BADC $\Dm_e$ admits only one stable oscillation of size $2^{n+m-1}$. In other terms, all configurations are recurrent 
and the convergence time is null. However, although all configurations are accessible from each other, those of high expressiveness are hard 
to reach. The proof of this result follows three points (they will be referred to Points~1,~2 and~3 later) in which it is respectively shown 
that:

\begin{enumerate}
\item any configuration can reach the least expressive one $\allzero$ in linear time;
\item configuration $\allzero$ can reach the highest expressive one $\altone$ in quadratic time;
\item any configuration can be reached from $\altone$ in linear time.
\end{enumerate}

Notice that Point~2 above is the hardest part. Indeed, to reach $\altone$ from $\allzero$ needs $O(n^2 + m^2)$ updates.
We will see that this upper bound is tight and that to increase a configuration expressiveness by $\delta$ requires at least $\delta^2$ 
updates (cf. Theorem~\ref{thm:quad}).\medskip

Let us consider Point~1. It is easy to see that sequence $\psimp$ is still efficient to reach $\allzero$ and thus, that the following Lemma 
holds.

\begin{lemma}
	\label{lem:simply}
	For any configuration $x$ of $\Dm_e$, $\simp {x} = \allzero$ holds and takes at most $2n + m - 2$ updates. 
\end{lemma}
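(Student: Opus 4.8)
The plan is to follow the proof of Theorem~\ref{thm:mbadc} almost line for line, the only point that genuinely needs rechecking being the effect of the $\sync$ at line~3 of $\psimp$, since in a negative BADC the automaton $\com$ receives two inhibiting interactions rather than one. Recall that in a canonical negative BADC both arcs reaching $\com$ are inhibiting and $f_c$ is the \textsc{and}-function, so for every configuration $y$ of $\Dm_e$ one has $f_c(y) = \neg y^\ell_{n-1} \wedge \neg y^r_{m-1}$; in particular $f_c(y) = 0$ as soon as $y^\ell_{n-1} = 1$.

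I would then split on the value of $x_0$. If $x_0 = 0$, lines~1--4 of $\psimp$ are skipped and only $\erase{\Cone}$ (line~5) and $\erase{\Ctwo}$ (line~6) are executed; since $\perase$ never updates $\com$, two successive applications of Property~\ref{prop:erase} show that every automaton of $\Cone$ and then of $\Ctwo$ takes state $x_0 = 0$, hence $\simp{x} = \allzero$, at a cost of $(n-1)+(m-1) = n+m-2$ updates. If $x_0 = 1$, lines~2 and~3 are run: by Property~\ref{prop:erase}, $\erase{\Cone}$ sets every $x^\ell_k$ to $x_0 = 1$, in particular $x^\ell_{n-1} = 1$, so by the observation above $\sync$ at line~3 sets $\com$ to $0$ irrespective of the current state of $c^r_{m-1}$. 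After line~4 we are back to the case $x_0 = 0$, and lines~5--6 conclude as before; the total number of updates is $(n-1)+1+(n-1)+(m-1) = 2n+m-2$.

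In both cases $\simp{x} = \allzero$, using at most $2n+m-2$ updates, which proves the lemma. I do not expect a real obstacle here: everything beyond the proof of Theorem~\ref{thm:mbadc} reduces to the remark that in $\Dm_e$ it is exactly the inhibition carried by $\Cone$ — made effective by the $\erase{\Cone}$ of line~2 — that forces the \textsc{and}-gate $\com$ down to state $0$, the second inhibiting arc coming from $\Ctwo$ being irrelevant for this step; this asymmetry is precisely why $\psimp$ handles $\Cone$ and $\Ctwo$ differently, and no evenness assumption on $n$ or $m$ is actually used.
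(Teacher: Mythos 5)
Your proof is correct and follows essentially the same route as the paper, which simply reuses the argument of Theorem~\ref{thm:mbadc} (case split on $x_0$, with $\erase{\Cone}$ forcing $x^\ell_{n-1}=1$ so that $\sync$ sets $\com$ to $\zero$, then the two final $\perase$ instructions), the only difference being that $\allzero$ is no longer stable; your explicit check that $f_c(y)=\neg y^\ell_{n-1}\wedge\neg y^r_{m-1}$ still yields $0$ once $\Cone$ is erased, and your update count $(n-1)+1+(n-1)+(m-1)=2n+m-2$, match the paper's reasoning.
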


\begin{proof}
	This proof is identical to that of Theorem~\ref{thm:mbadc}, except the fact that $\allzero$ is not a stable configuration anymore.\qed
\end{proof}

Now, let us pay attention to Point~2 that asks for increasing the expressiveness of $\allzero$. We characterise here a path from this 
configuration to $\altone$. To do so, let us proceed in two steps. The first one aims at increasing the expressiveness of $\Cone$ by means of 
sequence $\pcompa$ (cf. Lemma~\ref{lem:complex1}), the second at increasing that of $\Ctwo$ while ensuring not to decrease that of $\Cone$ by 
means of $\pcompb$ (cf. Lemma~\ref{lem:complex2}). Then, we get directly Lemma~\ref{lem:complex} with the composition $\pcomp = \pcompb \circ 
\pcompa$.

\begin{lemma}
	\label{lem:complex1}
	In a BADC $\Dm_e$, $\compa {\allzero} = ((\one\zero)^{\frac{n}{2}},\one^m)$ holds and takes at most $(n-1)(n+m-2)$	updates. 
\end{lemma}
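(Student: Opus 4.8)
Sketch of a proof of Lemma~\ref{lem:complex1}.

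The plan is to simulate $\compa{\allzero}$ pass by pass and to prove, by induction on the loop counter $i$ ranging from $1$ to $n-1$, an exact formula for the configuration reached at the end of the $i$-th pass. The first point to make explicit is the effect of $\sync$ in a canonical negative BADC: the two arcs entering $c$ are negative and $f_c$ is the \textsc{and}-function, so $\sync$ performs $x_c \leftarrow \neg x^\ell_{n-1}\wedge\neg x^r_{m-1}$. Throughout this particular run the last automaton of $\Cone$ will always be at state $\zero$ just before a $\sync$, so each $\sync$ in fact sets $x_c \leftarrow \neg x^r_{m-1}$; combined with the $\erase{\Ctwo}$ that ends every iteration (which makes $x^r_{m-1}$ equal the value of $x_c$ just computed), this forces $x_c$ to alternate and to equal $i\bmod 2$ after the $\sync$ of iteration $i$. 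Writing $\alpha_i = i\bmod 2$ and letting $w_i$ denote the unique alternating binary word of length $i$ ending with $\zero$ (so $w_1=\zero$, $w_2=\one\zero$, $w_3=\zero\one\zero$, and in general $w_i=\neg\alpha_i\cdot w_{i-1}$), the invariant I would carry is: \emph{at the end of iteration $i$ the configuration equals $(\alpha_i^{\,n-i}\,w_i,\ \alpha_i^{\,m})$ and $x_c=\alpha_i$.}

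For the base case $i=1$, starting from $\allzero$: $\sync$ gives $x_c=\one$, so $x^\ell=\one\zero^{n-1}$ and $x^r=\one\zero^{m-1}$; in $\expand{\Cone}$ the only $\zero\one$ pattern of $x^\ell$ is the wrap-around one at position $n-1$, hence $\kappa=n-1$ and the instruction rewrites positions $1,\dots,n-2$ to $\one$ (Property~\ref{prop:clock}), producing $x^\ell=\one^{n-1}\zero$; finally $\erase{\Ctwo}$ turns $\Ctwo$ into $\one^m$ (Property~\ref{prop:erase}). This is exactly $(\alpha_1^{\,n-1}w_1,\alpha_1^{\,m})$ with $x_c=\alpha_1=\one$.

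For the inductive step, assume the configuration at the end of iteration $i-1$ is $(\alpha_{i-1}^{\,n-i+1}w_{i-1},\alpha_{i-1}^{\,m})$ with $x_c=\alpha_{i-1}$. The $\sync$ of iteration $i$ changes only $x^\ell_0$, and since $w_{i-1}$ ends with $\zero$ and $x^r_{m-1}=\alpha_{i-1}=\neg\alpha_i$, it sets $x^\ell_0=\alpha_i$; thus $x^\ell$ becomes $\alpha_i\cdot(\neg\alpha_i)^{\,n-i}\cdot w_{i-1}$. The crucial computation is the value of $\kappa$ in $\expand{\Cone}$: in both parities the first position $k\in\{1,\dots,n-1\}$ witnessing the relevant pattern (a $\zero\one$ when $\alpha_i=\one$, a $\one\zero$ when $\alpha_i=\zero$) is the right boundary of the constant block $(\neg\alpha_i)^{\,n-i}$, because inside that block every symbol is followed by an identical one while the symbol just after the block — the first letter of $w_{i-1}$ — is the opposite one. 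Hence $\kappa=n-i$, so $\expand{\Cone}$ rewrites positions $1,\dots,n-i-1$ to $\alpha_i$ and leaves $x^\ell=\alpha_i^{\,n-i}\cdot(\neg\alpha_i)\cdot w_{i-1}=\alpha_i^{\,n-i}w_i$; then $\erase{\Ctwo}$ sets $\Ctwo$ to $\alpha_i^{\,m}$ and $x_c=\alpha_i$, which closes the induction. Evaluating the invariant at $i=n-1$ (here $n$ is even, so $\alpha_{n-1}=\one$ and $w_{n-1}=\zero(\one\zero)^{(n-2)/2}$) gives $(\one\cdot\zero(\one\zero)^{(n-2)/2},\ \one^m)=((\one\zero)^{\frac{n}{2}},\one^m)$, as claimed. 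For the count, iteration $i$ performs one update in $\sync$, $\kappa-1=n-i-1$ updates in $\expand{\Cone}$, and $m-1$ updates in $\erase{\Ctwo}$, hence $\sum_{i=1}^{n-1}(n+m-1-i)=(n-1)\bigl(\tfrac{n}{2}+m-1\bigr)$ updates in total, which is at most $(n-1)(n+m-2)$ whenever $n\geq 2$.

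The only delicate point — and the one I expect to be the main obstacle in a fully written proof — is the bookkeeping around $\expand{\Cone}$: one must check that the set over which $\kappa$ is minimised is never empty and that $\kappa=n-i$ in every case, including the degenerate ones where $\expand{\Cone}$ is a no-op (notably the last iteration $i=n-1$, where after the $\sync$ one already has $x^\ell=(\one\zero)^{\frac{n}{2}}$) and the small-$n$ cases. All of this reduces to the single observation that after the $\sync$ of iteration $i$ one has $x^\ell=\alpha_i\cdot(\neg\alpha_i)^{\,n-i}\cdot w_{i-1}$ with $n-i\geq 1$, so the required pattern does occur, and first occurs at the junction of the two blocks. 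The remainder is the routine unwinding of the three instructions already analysed in Section~\ref{sec:tools}.
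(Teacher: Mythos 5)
Your proof is correct and follows essentially the same route as the paper's: an induction on the loop counter with an explicit invariant for the configuration at the end of each iteration, your $(\alpha_i^{\,n-i}w_i,\ \alpha_i^{\,m})$ being exactly the paper's odd/even two-case invariant written in unified form. Your explicit justification that $\kappa=n-i$ in $\expand{\Cone}$ and the sharper count $(n-1)\bigl(\tfrac{n}{2}+m-1\bigr)\leq(n-1)(n+m-2)$ are fine refinements of details the paper leaves implicit.
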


\begin{proof}
	In this proof, we show that invariant \textbf{inv($i$)} defined as ``at the end of the {$i$th} iteration of the loop, the 
	configuration is $\begin{cases} 
		(\one^{n-i-1}(\one\zero)^{\frac{i+1}{2}},\one^m) & \text{if } i \text{ is odd}\\
		(\zero^{n-i-1}(\zero\one)^{\frac{i}{2}}\zero,\zero^m) & \text{otherwise}
	\end{cases}$'' holds for all $i \in \{1, \ldots, n-1\}$. Notice that we denote by arrow $x \stackrel{k}{\rightsquigarrow} x'$ the 
	transformation of $x$ into $x'$ by the execution of line~$k$ of the sequence considered (\emph{i.e.} $\pcompa$ here).\smallskip
	
	\noindent At the initialisation step ($i=1$), we have:\\
	\begin{tabular}{m{14mm}m{5mm}m{80mm}}
		$\allzero$ & $\stackrel{02}{\rightsquigarrow}$ & $(\one\zero^{n-1},\one\zero^{m-1})$\\
		& $\stackrel{03}{\rightsquigarrow}$ & $(\one^{n-1}\zero,\one\zero^{m-1})$\\ 
		& $\stackrel{04}{\rightsquigarrow}$ & $(\one^{n-1}\zero,\one^{m})$ $=$ $(\one^{n-2}(\one\zero),\one^{m})$\\
	\end{tabular} 
	
	\noindent and \textbf{inv($1$)} is true.\smallskip
	
	\noindent At the maintenance steps, we have:
	
	\begin{itemize}
	\item if $i \equiv 0 \mod 2$, at the beginning of the iteration, the configuration comes from iteration $i-1$ (that is odd) and is 
	consequently $(\one^{n-(i-1)-1}(\one\zero)^{\frac{(i-1)+1}{2}},\one^m)$. Thus we have:\\[1mm]
		\begin{tabular}{m{26mm}m{5mm}m{80mm}}
			$(\one^{n-i}(\one\zero)^{\frac{i}{2}},\one^m)$ & $\stackrel{02}{\rightsquigarrow}$ & 
			$(\zero\one^{n-i-1}(\one\zero)^{\frac{i}{2}},\zero\one^{m-1})$\\
			& $\stackrel{03}{\rightsquigarrow}$ & $(\zero^{n-i}(\one\zero)^{\frac{i}{2}}, \zero\one^{m-1})$ $=$ 
			$(\zero^{n-i-1}(\zero\one)^{\frac{i}{2}}\zero,\zero\one^{m-1})$\\
			& $\stackrel{04}{\rightsquigarrow}$ & $(\zero^{n-i-1}(\zero\one)^{\frac{i}{2}}\zero,\zero^{m})$\\
		\end{tabular}
	\item if $i \equiv 1 \mod 2$, at the beginning of the iteration, the configuration comes from iteration $i-1$ (that is even) and is 
	consequently $(\zero^{n-(i-1)-1}(\zero\one)^{\frac{i-1}{2}}\zero,\zero^m)$. Thus we have:\\[1mm]
		\begin{tabular}{m{30.5mm}m{5mm}m{80mm}}
			$(\zero^{n-i}(\zero\one)^{\frac{i-1}{2}}\zero,\zero^m)$ & $\stackrel{02}{\rightsquigarrow}$ &
			$(\one\zero^{n-i-1}(\zero\one)^{\frac{i}{2}}\zero,\one\zero^{m-1})$\\
			& $\stackrel{03}{\rightsquigarrow}$ & $(\one^{n-i}(\zero\one)^{\frac{i-1}{2}}\zero,\one\zero^{m-1})$\\
			& & \qquad$=$ $(\one^{n-i-1}(\one\zero)^{\frac{i}{2}+1},\one\zero^{m-1})$\\
			& $\stackrel{04}{\rightsquigarrow}$ & $(\one^{n-i-1}(\one\zero)^{\frac{i+1}{2}},\one^{m})$\\[-2mm]
		\end{tabular}
	\end{itemize}
	
	\noindent and \textbf{inv($i$)}, $2 \leq i \leq n-1$, still holds.\smallskip
	
	\noindent At the termination step, since $\Cone$ size is even by hypothesis, $n-1$ is odd and \textbf{inv($n-1$)} holds.\smallskip
	
	Thus $\allzero$ is transformed into $(\one^{n-(n-1)-1}(\one\zero)^{\frac{n-1+1}{2}},\one^m) = ((\one\zero)^{\frac{n}
	{2}},\one^m)$, which is the expected result. Moreover, remark that the number of effective updates made by $\pcompa$ is at most $(n-1)
	(n+m-2)$.\qed
\end{proof}

\begin{lemma}
	\label{lem:complex2}
	In a BADC $\Dm_e$, $\compb {((\one\zero)^{\frac{n}{2}},\one^m)} = \altone$ holds and takes at most $(m-2)(n+m-2) + (2m-1)$ updates. 
\end{lemma}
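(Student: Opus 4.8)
The plan is to reuse the template of the proof of Lemma~\ref{lem:complex1}: first handle the preamble (lines 1--6) of $\pcompb$, then state a loop invariant for the \texttt{for}-loop (lines 7--11) and prove it by induction on the iteration index, and finally read off the terminal configuration and add up the updates. Throughout I use that, in a $\Dm_e$, $f_\com(x) = \overline{x^\ell_{n-1}} \wedge \overline{x^r_{m-1}}$ while every other local function is the identity, so $\pshift$, $\pexpand$ and $\perase$ merely propagate states along a cycle (Properties~\ref{prop:counter} and~\ref{prop:clock}).

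Since the input configuration is $((\one\zero)^{\frac{n}{2}},\one^m)$, the test of line~1 holds and lines 2--6 are all executed. A direct computation in the $\stackrel{k}{\rightsquigarrow}$ notation settles the preamble: the first $\sync$ sets $\com$ to $\zero$ because $x^r_{m-1}=\one$; $\perase$ then fills $\Ctwo$ with $\zero$; the second $\sync$ resets $\com$ to $\one$, so $\Cone$ is back to $(\one\zero)^{\frac{n}{2}}$ and $x^r$ becomes $\one\zero^{m-1}$; finally $\pexpand$ on $\Ctwo$ (for which $\kappa = m-1$ here) turns $x^r$ into $\one^{m-1}\zero$. Hence the preamble maps $((\one\zero)^{\frac{n}{2}},\one^m)$ to $((\one\zero)^{\frac{n}{2}},\one^{m-1}\zero)$ using $1 + (m-1) + 1 + (m-2) = 2m-1$ updates.

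Next I would introduce \textbf{inv($i$)}, for $1 \le i \le m-2$, stating that at the end of the $i$-th iteration the configuration equals $((\zero\one)^{\frac{n}{2}}, \zero^{m-i-1}(\one\zero)^{\frac{i+1}{2}})$ if $i$ is odd and $((\one\zero)^{\frac{n}{2}}, \one^{m-i-2}(\one\zero)^{\frac{i+2}{2}})$ if $i$ is even, and prove it by running one iteration of $\pshift$, $\sync$, $\pexpand$ at the base case $i=1$ and at each maintenance step. The points to check are: $\pshift$ sends $(\one\zero)^{\frac{n}{2}}$ to $\one\one(\zero\one)^{\frac{n}{2}-1}$ (so $x^\ell_{n-1}=\one$) and $(\zero\one)^{\frac{n}{2}}$ to $\zero\zero(\one\zero)^{\frac{n}{2}-1}$ (so $x^\ell_{n-1}=\zero$); therefore in an odd iteration the subsequent $\sync$ forces $\com$ to $\zero$ regardless of $x^r_{m-1}$, while in an even iteration $x^\ell_{n-1}=\zero$ and, since the trailing symbol of $x^r$ is $\zero$ by the invariant, $\sync$ forces $\com$ to $\one$ --- either way $\Cone$ returns to an alternating word; and once $\com$ has been updated, $x^r$ has the form $\zero\one^{m-i-2}(\one\zero)^{\frac{i+1}{2}}$ (odd $i$) or $\one\zero^{m-i-1}(\one\zero)^{\frac{i}{2}}$ (even $i$), on which $\pexpand$ computes $\kappa = m-i-1$ and propagates $x^r_0$ over positions $1,\dots,m-i-2$, producing exactly the $x^r$ of \textbf{inv($i$)} and destroying no $\zero\one$ nor $\one\zero$ pattern of $\Ctwo$. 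Each iteration costs $(n-1) + 1 + (\kappa-1) = n+m-i-2 \le n+m-2$ updates.

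Finally, $m$ being even, $m-2$ is even, so \textbf{inv($m-2$)} yields $((\one\zero)^{\frac{n}{2}}, \one^{0}(\one\zero)^{\frac{m}{2}}) = \altone$, the claimed output (the degenerate case $m=2$, for which the loop is empty, is already covered by the preamble). Summing the costs, $\pcompb$ performs at most $(2m-1) + \sum_{i=1}^{m-2}(n+m-2) = (m-2)(n+m-2) + (2m-1)$ updates. I expect the only genuine difficulty to be the last item above, namely verifying carefully --- separately for even and odd $i$ --- that $\pexpand$ picks the correct $\kappa$ and preserves, rather than shifts away or erases, the $(\one\zero)$-suffix of $\Ctwo$ built up so far; this is exactly why line~10 uses $\pexpand$ rather than $\perase$ and line~8 uses $\pshift$ rather than $\perase$, in contrast with $\pcompa$.
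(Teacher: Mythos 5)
Your proposal is correct and takes essentially the same route as the paper's own proof: the identical preamble computation for lines 1--6 yielding $((\one\zero)^{\frac{n}{2}},\one^{m-2}(\one\zero))$ in $2m-1$ updates, the very same loop invariant \textbf{inv($i$)} proved by induction over the $m-2$ iterations (with the correct determination $\kappa=m-i-1$ for $\pexpand$ in both parities), and the same count $(m-2)(n+m-2)+(2m-1)$. Your explicit handling of the degenerate case $m=2$ is a harmless addition; no changes are needed.
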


\begin{proof}
	This proof is similar to that of Lemma~\ref{lem:complex1}. Indeed, we show that invariant \textbf{inv($i$)} defined as ``at the end of 
	the {$i$th} iteration of the loop, the configuration is $\begin{cases} 
		((\zero\one)^{\frac{n}{2}},\zero^{m-i-1}(\one\zero)^{\frac{i+1}{2}}) & \text{if } i \text{ is odd}\\
		((\one\zero)^{\frac{n}{2}},\one^{m-i-2}(\one\zero)^{\frac{i}{2}+1}) & \text{otherwise}
	\end{cases}$'' holds for all $i \in \{1, \ldots, m-2\}$.\smallskip 
	
	\noindent Let us first consider lines 1 to 6 of $\pcompb$, before we enter the loop. Since the configuration is 
	$((\one\zero)^{\frac{n}{2}}, \one^m)$ initially, these lines transform it into $((\one\zero)^{\frac{n}{2}}, \one^{m-2}(\one\zero))$ with 
	respect to the following changes:\\
	\begin{tabular}{m{19mm}m{5mm}m{80mm}}
		$((\one\zero)^{\frac{n}{2}}, \one^m)$ & $\stackrel{01}{\rightsquigarrow}$ & 
		$(\zero\zero(\one\zero)^{\frac{n}{2}-1}, \zero\one^{m-1})$\\
		& $\stackrel{02}{\rightsquigarrow}$ & $(\zero\zero(\one\zero)^{\frac{n}{2}-1}, \zero^m)$\\
		& $\stackrel{03}{\rightsquigarrow}$ & $((\one\zero)^{\frac{n}{2}}, \one\zero^{m-1})$\\
		& $\stackrel{04}{\rightsquigarrow}$ & $((\one\zero)^{\frac{n}{2}}, \one^{m-1}\zero)$ $=$ $((\one\zero)^{\frac{n}{2}}, 
			\one^{m-2}(\one\zero))$\\
	\end{tabular}\smallskip
		 
	\noindent Now, at the initialisation step ($i=1$) of the loop, we have:\\
	\begin{tabular}{m{29.5mm}m{5mm}m{80mm}}
		$((\one\zero)^{\frac{n}{2}}, \one^{m-2}(\one\zero))$ & $\stackrel{06}{\rightsquigarrow}$ & 
		$(\one(\one\zero)^{\frac{n}{2}-1}\one, \one^{m-2}(\one\zero))$\\
		& $\stackrel{07}{\rightsquigarrow}$ & $(\zero(\one\zero)^{\frac{n}{2}-1}\one, \zero\one^{m-3}(\one\zero))$ $=$ 
			$((\zero\one)^{\frac{n}{2}}, \zero\one^{m-2}\zero)$\\ 
		& $\stackrel{08}{\rightsquigarrow}$ & $((\zero\one)^{\frac{n}{2}}, \zero^{m-2}(\zero\one))$\\
	\end{tabular} 
	
	\noindent and \textbf{inv($1$)} is true.\smallskip
	
	\noindent At the maintenance steps, we have:

	\begin{itemize}
	\item if $i \equiv 0  \mod 2$:\\[1mm]
		\begin{tabular}{m{31.5mm}m{5mm}m{100mm}}
			$((\zero\one)^{\frac{n}{2}},\zero^{m-i}(\one\zero)^{\frac{i}{2}})$ & $\stackrel{06}{\rightsquigarrow}$ & 
			$(\zero(\zero\one)^{\frac{n}{2}-1}\zero, \zero^{m-i}(\one\zero)^{\frac{i}{2}})$\\
			& $\stackrel{07}{\rightsquigarrow}$ & $(\one(\zero\one)^{\frac{n}{2}-1}\zero, \one\zero^{m-i-1}(\one\zero)^{\frac{i}{2}})$\\
			& & \qquad$=$ $((\one\zero)^{\frac{n}{2}}, \one\zero^{m-i-1}(\one\zero)^{\frac{i}{2}})$\\
			& $\stackrel{08}{\rightsquigarrow}$ & $((\one\zero)^{\frac{n}{2}}, \one^{m-i-2}(\one\zero)^{\frac{i}{2}+1})$\\
		\end{tabular}
	\item if $i \equiv 1 \mod 2$:\\[1mm]
		\begin{tabular}{m{37.5mm}m{5mm}m{100mm}}
			$((\one\zero)^{\frac{n}{2}}, \one^{m-i-1}(\one\zero)^{\frac{i+1}{2}})$ & $\stackrel{06}{\rightsquigarrow}$ & 
			$(\one(\one\zero)^{\frac{n}{2}-1}\one, \one^{m-i-1}(\one\zero)^{\frac{i+1}{2}})$\\
			& $\stackrel{07}{\rightsquigarrow}$ & $(\zero(\one\zero)^{\frac{n}{2}-1}\one, \zero\one^{m-i-2}(\one\zero)^{\frac{i+1}{2}})$ \\
			& & \qquad$=$ $((\zero\one)^{\frac{n}{2}}, \zero\one^{m-i-2}(\one\zero)^{\frac{i+1}{2}})$\\
			& $\stackrel{08}{\rightsquigarrow}$ & $((\zero\one)^{\frac{n}{2}}, \zero^{m-i-1}(\one\zero)^{\frac{i+1}{2}})$\\[-2mm]
		\end{tabular}
	\end{itemize}
	
	\noindent and \textbf{inv($i$)}, $2 \leq i \leq m-2$, still holds.\smallskip
	
	\noindent At the termination step, since $\Ctwo$ size is even by hypothesis, $m-2$ is even and \textbf{inv($m-2$)} holds.\smallskip
	
	Thus $((\one\zero)^{\frac{n}{2}}, \one^{m-(m-2)-2} (\one\zero)^{\frac{m-2}{2}+1}) = ((\one\zero)^{\frac{n}{2}},  
	(\one\zero)^{\frac{m}{2}})$ is indeed reached by $((\one\zero)^{\frac{n}{2}}, \one^m)$, which is the expected result. Moreover, 
	remark that the number of effective updates made by $\pcompb$ is at most $(m-2)(n+m-2) + (2m-1)$.\qed
\end{proof}

\begin{lemma}
	\label{lem:complex}
	In a BADC $\Dm_e$, $\comp {\allzero} = \altone$ holds and takes at most $(n + m)^2 - 5 (n - 1) - 3m$ updates.
\end{lemma}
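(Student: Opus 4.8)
The plan is to exploit the fact that $\pcomp$ is \emph{defined} as the composition $\pcompb \circ \pcompa$, so the statement reduces to chaining Lemmas~\ref{lem:complex1} and~\ref{lem:complex2}. First I would invoke Lemma~\ref{lem:complex1}: in a BADC $\Dm_e$, the sequence $\compa{\allzero}$ produces the configuration $((\one\zero)^{\frac{n}{2}},\one^m)$ and uses at most $(n-1)(n+m-2)$ updates. The key point to note here is that this output configuration is \emph{exactly} the configuration on which Lemma~\ref{lem:complex2} is stated, so no gluing step is needed: running $\pcompa$ from $\allzero$ leaves the current configuration in precisely the form required as the starting point of $\pcompb$.

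Next I would apply Lemma~\ref{lem:complex2}, which gives $\compb{((\one\zero)^{\frac{n}{2}},\one^m)} = \altone$ using at most $(m-2)(n+m-2) + (2m-1)$ updates. Composing the two, $\comp{\allzero} = \compb{\compa{\allzero}} = \altone$, which establishes correctness. For the update bound I would simply add the two contributions:
\[
(n-1)(n+m-2) + (m-2)(n+m-2) + (2m-1) = (n+m-3)(n+m-2) + 2m - 1.
\]
Writing $t = n+m$, one has $(t-3)(t-2) = t^2 - 5t + 6$, so the total is $t^2 - 5t + 6 + 2m - 1 = t^2 - 5n - 3m + 5 = (n+m)^2 - 5(n-1) - 3m$, which is exactly the claimed bound.

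I expect essentially no obstacle in this proof: it is a two-line composition argument. The only two points requiring any care are (i) verifying that the intermediate configuration $((\one\zero)^{\frac{n}{2}},\one^m)$ output by $\pcompa$ matches verbatim the hypothesis of Lemma~\ref{lem:complex2} (so that the upper bounds genuinely add), and (ii) the routine algebraic simplification above. Everything substantive — correctness of each sub-sequence via its loop invariant, and the behaviour of the elementary instructions $\pclock$, $\pexpand$, $\pshift$ and $\sync$ — has already been discharged in the proofs of Lemmas~\ref{lem:complex1} and~\ref{lem:complex2}.
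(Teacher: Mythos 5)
Your proposal is correct and matches the paper's own proof: the paper likewise observes that $\pcomp = \pcompb \circ \pcompa$, chains Lemmas~\ref{lem:complex1} and~\ref{lem:complex2} (whose output/input configurations coincide), and sums the two bounds to get $(n-1)(n+m-2) + (m-2)(n+m-2) + (2m-1) = (n+m)^2 - 5(n-1) - 3m$. Your algebraic simplification via $t = n+m$ is just a slightly tidier route to the same identity.
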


\begin{proof}
	The proof that $\comp {\allzero} = \altone$ holds directly derives from Lemmas~\ref{lem:complex1} and~\ref{lem:complex2}. Concerning the 
	number of updates that are needed in the worst case, it suffices to add the maximum number of updates done by $\pcompa$ and $\pcompb$. 
	So, we have:
	\begin{multline*}
		(n-1)(n+m-2) + (m-2)(n+m-2) + (2m-1)\\
		= (n^2 + nm - 3n - m + 2) + (m^2 + nm - 2n - 2m + 4) + (2m -1)\\
		= n^2 + m^2 + 2nm - 5n - 3m + 5\\
		= (n + m)^2 - 5 (n - 1) - 3m\text{,}
	\end{multline*}
	
	\noindent which is the expected result.\qed
\end{proof}

Point~3 is developed in Lemma~\ref{lem:copyPair}, in which we make particular use of $\pcopp$ (cf. Table~\ref{tab:algo0}). 

\begin{lemma}
	\label{lem:copyPair}
	In a BADC $\Dm_e$, for any $x'$, $\copp {\altone} {x'}$ transforms configuration $\altone$ into $x'$ in at most $3(n + m - 4) - 1$ 
	updates. 
\end{lemma}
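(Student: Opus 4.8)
The plan is to follow the control flow of $\pcopp$, which first compares the state of $\com$ in $\altone$ (which is $\one$) with $x'_0$, and then calls $\pcop$. The single observation that does the work is that, in both branches, the configuration actually passed to $\pcop$ is fully alternating on each of its two cycles, hence satisfies the first of the three alternatives in the hypothesis of Lemma~\ref{lem:copy}; I can therefore treat $\pcop$ as a black box --- correctness together with the bound of $2(n+m-6)$ updates --- and need only count the at most $n+m-1$ extra updates coming from lines~2--4.

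First I would settle the case $x'_0 = \one$. Then lines~2--4 are skipped, so $\copp{\altone}{x'}$ is simply $\cop{\altone}{x'}$. The $\ell$- and $r$-projections of $\altone$ are $(\one\zero)^{\frac{n}{2}}$ and $(\one\zero)^{\frac{m}{2}}$, both alternating, so $x^s_i \neq x^s_{i-1}$ for every $s \in \{\ell,r\}$ and every relevant $i$; moreover the state of $\com$ in $\altone$ equals $x'_0$. Lemma~\ref{lem:copy} then gives $\cop{\altone}{x'} = x'$ in at most $2(n+m-6)$ updates, and $2(n+m-6) \le 3(n+m-4)-1$ because $n+m\ge1$.

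Next I would treat the case $x'_0 = \zero$, which triggers lines~2--4. The core step is to show that $\shift{\Cone}$, then $\shift{\Ctwo}$, then $\sync$ carry $\altone$ to $\altzero$. By Property~\ref{prop:counter}, $\shift{\Cone}$ keeps $\com$ fixed and, for each $k\ge1$, copies the old state of position $k-1$ into position $k$; as the cycle starts alternating, each of these $n-1$ updates is effective, and since $n-1$ is odd the last automaton of $\Cone$ ends at state $\one$ (symmetrically for $\shift{\Ctwo}$, with $m-1$ effective updates). Since both interactions into $\com$ are inhibiting and $f_c$ is the \textsc{and}-function, $\sync$ then sets $\com$ to $\zero$ (one more effective update), and a short index computation shows each cycle now reads $(\zero\one)^{\frac{n}{2}}$, resp.\ $(\zero\one)^{\frac{m}{2}}$, i.e.\ the current configuration is $\altzero$. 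Being alternating on both cycles, with $\com$-state $\zero = x'_0$, the configuration $\altzero$ again meets the hypothesis of Lemma~\ref{lem:copy}, which gives $\cop{\altzero}{x'} = x'$ in at most $2(n+m-6)$ further updates. In total $\copp{\altone}{x'} = x'$ in at most $(n-1)+(m-1)+1+2(n+m-6) = 3(n+m-4)-1$ updates, so the bound is also tight in this branch.

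The step I expect to be the main obstacle is this last index computation: confirming carefully, using the even-size hypothesis, that two shifts followed by one $\sync$ send the highly expressive configuration $\altone$ to the equally highly expressive $\altzero$, without destroying any $\zero\one$ pattern, and that all shift updates are truly effective, so that the update count is exactly $3(n+m-4)-1$. The remainder is routine: both sub-cases end with an invocation of Lemma~\ref{lem:copy}, whose first hypothesis is met for free by any alternating configuration.
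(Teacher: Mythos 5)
Your proposal is correct and follows essentially the same route as the paper: split on whether $x'_0$ agrees with the state of $\com$ in $\altone$, show that lines~2--4 of $\pcopp$ (two shifts, one $\sync$, i.e.\ $n+m-1$ updates) turn $\altone$ into the other maximally expressive configuration $\altzero$, and then invoke Property~1 of Lemma~\ref{lem:copy} on the resulting alternating configuration to conclude with the bound $(n+m-1)+2(n+m-6)=3(n+m-4)-1$. Your explicit check that every shift update is effective and that $\sync$ sets $\com$ to $\zero$ via the two inhibiting arcs is just a more detailed rendering of the same argument.
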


\begin{proof}
	Let us consider $\pcopp$ where $x = \altone$ and $x'$ is an arbitrary configuration. The proof is done by considering two cases, that of 
	$x_0 \neq x'_0$ and that of $x_0 = x'_0$. The general idea of this proof is to show that $\pcop$ allows to find a sequence from $x$ to 
	$x'$ if $c$ and $c'$ are at the same state. Obviously, in the first case, $x$ needs to be transformed for $\pcop$ to apply correctly. 
	That is what is done in lines~1-5 of $\pcopp$ that transform $x$ into the other most expressive configuration.\smallskip

	So, let us focus on the case $x_0 \neq x'_0$, which means that $c'_0 = 0$, and the transformations that are performed on it by lines~1-5. 
	We have:\\
	\begin{tabular}{m{24.5mm}m{5mm}m{80mm}}
		$\altone$ & $\stackrel{02}{\rightsquigarrow}$ & $(\one(\one\zero)^{\frac{n}{2}-1}\one, (\one\zero)^{\frac{m}{2}})$\\
		& $\stackrel{03}{\rightsquigarrow}$ & $(\one(\one\zero)^{\frac{n}{2}-1}\one, \one(\zero\one)^{\frac{m}{2}-1}\one)$\\
		& $\stackrel{04}{\rightsquigarrow}$ & $(\zero(\one\zero)^{\frac{n}{2}-1}\one, \zero(\zero\one)^{\frac{m}{2}-1}\one)$ $=$ 
			$\altzero$\\
	\end{tabular}\smallskip
	
	On this basis, at line~6, just before $\pcop$ is executed, $c$ and $c'$ necessarily have the same state and Property~1 of 
	Lemma~\ref{lem:copy} holds for $x$ and can thus be applied for ending the proof. Also, notice that, in the worst case, every 
	automaton of $\Cone$ and $\Ctwo$ are updated before the execution of $\pcop$. Thus, this sequence takes at most $3(n + m - 4) - 1$ 
	updates.\qed
\end{proof}

By combining Lemmas~\ref{lem:simply},~\ref{lem:complex} and~\ref{lem:copyPair}, for all configurations $x$ and $x'$, the composition 
$\copp {\comp {\simp {x}}} {x'} = x'$ holds, which shows that there exists a unique attractor of size $2^{n+m-1}$. From this is derived the 
following theorem. 

\begin{theorem}
	\label{thm:oneAttEven}
	A BADC $\Dm_e$ admits a unique attractor of size $2^{n+m-1}$. In this stable oscillation, any configuration can be reached by any other 
	one in $O(n^2 + m^2)$. However, some configurations are specific: $\allzero$ and $\allone$ can be reached from any other one in 
	$O(n + m)$, and configurations $\altzero$ and $\altone$ can reach any configuration in $O(n + m)$.
\end{theorem}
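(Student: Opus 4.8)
The plan is to package Lemmas~\ref{lem:simply},~\ref{lem:complex} and~\ref{lem:copyPair} into a single sequence realising an arbitrary transition between configurations, to deduce from it that the whole configuration space is one strongly connected component, and finally to read off the four reachability bounds. Concretely, given arbitrary configurations $x,x'$ of $\Dm_e$, I would consider $\copp {\comp {\simp {x}}} {x'}$ and invoke, in order: Lemma~\ref{lem:simply} to get $\simp{x} = \allzero$, Lemma~\ref{lem:complex} to get $\comp{\allzero} = \altone$, and Lemma~\ref{lem:copyPair} to get $\copp{\altone}{x'} = x'$. Since every instruction expands to a finite list of single-automaton updates, this is a path from $x$ to $x'$ in $\mathscr{G}$. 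Ranging over all $x,x'$ among the $2^{n+m-1}$ configurations of $\Dm_e$, this shows $\mathscr{G}$ is strongly connected, so $\BB^{n+m-1}$ is its unique strongly connected component; being the whole vertex set it has no outgoing arc and is therefore the unique TSCC, \emph{i.e.}\ the unique attractor, of size $2^{n+m-1}$. As $n,m \geq 2$ this exceeds $1$, so the attractor is a stable oscillation, every configuration is recurrent, and the convergence time of $\Dm_e$ is null.

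For the generic $O(n^2+m^2)$ bound I would simply add the worst-case lengths along $\copp {\comp {\simp {x}}} {x'}$: at most $2n+m-2$ for $\psimp$ (Lemma~\ref{lem:simply}), at most $(n+m)^2 - 5(n-1) - 3m$ for $\pcomp$ (Lemma~\ref{lem:complex}), and at most $3(n+m-4)-1$ for $\pcopp$ (Lemma~\ref{lem:copyPair}); the sum is $(n+m)^2$ up to terms linear in $n$ and $m$, hence a trajectory of length $O(n^2+m^2)$ joins any configuration to any other.

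Next I would handle the four distinguished configurations. Reaching $\allzero$ from any $x$ within $O(n+m)$ updates is precisely Lemma~\ref{lem:simply}, and reaching any $x'$ from $\altone$ within $O(n+m)$ updates is precisely Lemma~\ref{lem:copyPair}. For $\allone$, I would first run $\simp{x} = \allzero$ and then the short sequence $\sync;\ \erase{\Cone};\ \erase{\Ctwo}$: in $\allzero$ we have $f_c = \neg\zero \wedge \neg\zero = \one$, so $\sync$ sets $c$ to $\one$, and the two $\perase$ then propagate this $\one$ all along $\Cone$ and $\Ctwo$ by Property~\ref{prop:erase}, producing $\allone$; the total is at most $(2n+m-2)+(n+m-1) = O(n+m)$. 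For $\altzero$, I would run the argument of Lemma~\ref{lem:copyPair} verbatim with $\copp{\altzero}{x'}$: if $x'_0 = 0 = \altzero_0$ the guard of $\pcopp$ is skipped and, $\altzero$ being alternating, Property~1 of Lemma~\ref{lem:copy} yields $\cop{\altzero}{x'} = x'$; if $x'_0 = 1$ the guard fires and $\shift{\Cone};\ \shift{\Ctwo};\ \sync$ turns $\altzero$ into $\altone$ (a short computation using Property~\ref{prop:counter} together with $f_c(\ldots) = \one$ on the shifted configuration), after which $c = \one = x'_0$ and Property~1 of Lemma~\ref{lem:copy} applies again; in both cases at most $3(n+m-4)-1$ updates are used.

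The genuinely hard content of this section is not in this theorem but in Lemma~\ref{lem:complex}, whose $O(n^2+m^2)$ construction of a path from $\allzero$ to the maximally expressive configuration is the only nontrivial ingredient; here everything reduces to bookkeeping over already-proven sequences plus the standard observation that strong connectivity forces the unique attractor to be the whole space. The one point that still needs care is that global negation is \emph{not} a symmetry of a negative BADC: since $f_c$ is an \textsc{and} of negated inputs, $f_c(\bar x) = x^\ell_{n-1} \wedge x^r_{m-1}$, which is not $\neg f_c(x)$ in general. Hence the linear bounds for $\allone$ and for $\altzero$ cannot be obtained by dualising those for $\allzero$ and $\altone$, and must be established directly, which is what the two constructions above do by exploiting the behaviour of $\sync$ on the uniform and on the alternating configurations.
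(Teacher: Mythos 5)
Your proposal is correct and matches the paper's own derivation, which obtains the theorem from exactly the same composition $\copp{\comp{\simp{x}}}{x'}$ of Lemmas~\ref{lem:simply},~\ref{lem:complex} and~\ref{lem:copyPair}, together with the same bookkeeping of worst-case update counts. The only difference is that you make explicit the linear-time path to $\allone$ (via $\allzero$, $\sync$ and the two $\perase$) and the copy from $\altzero$, which the paper leaves implicit; your treatment of these, including the remark that global negation is not a symmetry of $\Dm_e$, is sound.
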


Now we show that the bound $O(n^2 + m^2)$ of Theorem~\ref{thm:oneAttEven} above is tight.

\begin{theorem}
	\label{thm:quad}
	Let $x$ be a configuration of a BADC $\Dm_e$. To increase the expressiveness of $x$ by $\delta \in \mathbb{N}$ needs $\Omega(\delta^2)$ 
	updates. 
\end{theorem}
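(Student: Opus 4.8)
The plan is to follow the ``walls'' of a configuration inside each cycle. Call a $\zero\one$ pattern a \emph{rising wall} and a $\one\zero$ pattern a \emph{falling wall}; in a cycle $\mathscr{C}$ of size $N$ with $\com$ at position $0$, say the wall on edge $(i-1,i)$ (indices mod $N$) is \emph{at distance $i\in\{1,\dots,N\}$ from $\com$}. Rising and falling walls alternate around a cycle, so the number of rising walls of $\mathscr{C}^s$ is exactly $E_s$, the expressiveness of $x^s$, and the expressiveness of a configuration is $E_\ell+E_r$. Fix any run from $x$ to a configuration $x'$ whose expressiveness exceeds that of $x$ by $\delta$, and let $U$ be its number of updates; I want $U=\Omega(\delta^2)$.

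First I would pin down the effect of one update on the walls of a cycle. An application of $\update{c^s_i}$ with $i\neq 0$ leaves $\mathscr{C}^s$ unchanged if $x^s_{i-1}=x^s_i$, and otherwise either moves exactly one wall one step forward, from edge $(i-1,i)$ to edge $(i,i+1)$, keeping its type, or annihilates the two walls (one rising, one falling) on edges $(i-1,i)$ and $(i,i+1)$. In particular a non-$\sync$ update never creates a wall. An application of $\sync$ that does not change $x_0$ changes nothing; otherwise it touches only the two edges incident to $\com$ in each cycle, and inside $\mathscr{C}^s$ it either moves one wall across $\com$ (i.e.\ between distances $1$ and $N$), annihilates the pair carried by those two edges, or \emph{creates} one rising and one falling wall — and this last case occurs only when the local pattern around $\com$ in $\mathscr{C}^s$ is $\zero\zero\zero$ (creating the rising wall at distance $N$) or $\one\one\one$ (creating the rising wall at distance $1$). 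Hence a wall is born only through $\sync$, only at distance $1$ or $N$, and keeps its identity from birth until annihilation.

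Now the counting. Throughout the run the number of rising walls of $\mathscr{C}^s$ moves from $E_s(x)$ to $E_s(x)+d_s$ (with $d_\ell+d_r=\delta$), going up by $1$ at each rising-wall creation and down by $1$ at each rising-wall annihilation; writing $C$ and $A$ for the two totals, $C-A=d_s$. Splitting the rising walls of $x'^s$ into those inherited from $x$ and those born during the run, the number of the latter equals $C$ minus the number of freshly created rising walls that were later annihilated, hence is at least $C-A=d_s$. So $x'^s$ contains at least $d_s$ rising walls born during the run; being distinct walls of one configuration they occupy distinct distances, at most one of which is $N$. A rising wall born during the run starts at distance $1$ or $N$, and thereafter can only leave distance $1$ by advancing forward through distances $2,3,\dots$ one step at a time, each such step being a non-$\sync$ update, so if it ends at distance $d<N$ it has performed at least $d-1$ non-$\sync$ forward moves. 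Summing over the at least $d_s-1$ of these walls that end at distances $<N$ — distinct values in $\{1,\dots,N-1\}$, hence at least $1,2,\dots,d_s-1$ — the run performs at least $\binom{d_s-1}{2}$ non-$\sync$ forward moves inside $\mathscr{C}^s$; as these are pairwise distinct updates, $U\ge\binom{d_s-1}{2}$.

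Finally, since $d_\ell+d_r=\delta$, at least one of $d_\ell,d_r$ is $\ge\delta/2$; applying the previous bound to that cycle yields $U\ge\binom{\lceil\delta/2\rceil-1}{2}=\Omega(\delta^2)$, which proves the theorem. I expect the delicate point to be the second paragraph — the exhaustive verification that non-$\sync$ updates never create walls and that $\sync$ creates walls only out of the $\zero\zero\zero$ and $\one\one\one$ patterns at $\com$, and only at distances $1$ and $N$, together with the routine but necessary bookkeeping that walls have well-defined identities between creation and annihilation; once these are in place the rest is just the arithmetic above.
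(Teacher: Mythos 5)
Your proof is correct and follows essentially the same route as the paper's: both rest on the observation that new $\zero\one$ patterns can only be created at $\com$ by $\sync$ and must then be propagated along the cycle one update per step, so that $\delta$ surviving new patterns, necessarily ending on distinct edges at increasing distances from $\com$, cost an arithmetic-series number of updates, i.e.\ $\Omega(\delta^2)$. Your wall creation/movement/annihilation bookkeeping (handling annihilations, the crossing of $\com$, and the split of $\delta$ between the two cycles) is in fact more rigorous than the paper's informal per-$\sync$-call accounting, but the underlying argument is the same.
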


\begin{proof}
	Although this theorem deals with configuration $x$, let us focus with no loss of generality on how the expressiveness of $x^\ell$ can be 
	increased by $\delta$. First, notice that the only way to increase the expressiveness of $\Cone$ needs to use $\sync$. However, to 
	execute two $\sync$ puts $c_0$ at its initial state. So, to be efficient, the two $\sync$ have to be separated by a sequence of updates. 
	Take for instance the following sequence of instructions for $i \in \{1, \ldots, n\}$: $\sync;$ $\clock {\Cone} {1} {i};$ $\sync;$ 
	$\clock {\Cone} {1} {i};$. With this sequence, the second call to $\pclock$ leads to replace all the information created by the first one 
	and contained by automata $c^\ell_1, \ldots, c^\ell_i$. As a consequence, to create $\delta$ new patterns $\zero\one$ in $\Cone$ needs 
	the calls of $\sync$ to be separated by specific sequences which propagate along the cycle the patterns generated by the previous call to 
	$\sync$. Now, since there is at least $\delta$ calls to $\sync$, just after its $i$th call, the pattern has to be propagated at least 
	until automaton $c_{\delta - (i - 1)}$. Thus, the $i$th call to $\sync$ has to be followed by at least $\delta - (i - 1)$ updates in 
	order to ensure that the pattern is effectively kept. As a result, at the end, to increase the expressiveness of $\Cone$ by $\delta$
	patterns needs $\Omega(\delta^2)$ updates.\qed
\end{proof}

Corollary~\ref{cor:quad} is then directly derived from the two previous theorems, considering that $\delta = \frac{n}{2}$ for $\Cone$ and 
$\delta = \frac{m}{2}$ for $\Ctwo$.

\begin{corollary}
	\label{cor:quad}
	In a BADC $\Dm_e$, to reach $\altone$ from $\allzero$ requires $\Theta(n^2 + m^2)$ steps. 
\end{corollary}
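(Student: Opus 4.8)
The plan is to establish the two matching bounds separately. The upper bound $O(n^2+m^2)$ is essentially free from the constructive results already proved, and the lower bound $\Omega(n^2+m^2)$ follows by applying Theorem~\ref{thm:quad} once to each of the two cycles. The only preliminary fact I need is the expressiveness of the two extremal configurations. Clearly $\allzero$ contains no $\zero\one$ pattern, so its expressiveness is $0$. In $\altone = ((\one\zero)^{\frac{n}{2}},(\one\zero)^{\frac{m}{2}})$ (recall $n$ and $m$ are even), each automaton at state $\zero$ is immediately followed, cyclically, by an automaton at state $\one$; hence the sub-configuration on $\Cone$ has expressiveness $\frac{n}{2}$ and the one on $\Ctwo$ has expressiveness $\frac{m}{2}$, for a total of $\frac{n+m}{2}$.

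For the upper bound, I would simply invoke Lemma~\ref{lem:complex}: $\comp{\allzero} = \altone$ via a sequence of at most $(n+m)^2 - 5(n-1) - 3m$ updates, which is $O(n^2+m^2)$. Therefore a shortest trajectory from $\allzero$ to $\altone$ has length $O(n^2+m^2)$.

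For the lower bound, I would fix an arbitrary trajectory $\tau$ from $\allzero$ to $\altone$. Along $\tau$ the expressiveness of the sub-configuration on $\Cone$ rises from $0$ to $\frac{n}{2}$; by Theorem~\ref{thm:quad} with $\delta = \frac{n}{2}$, this forces $\tau$ to contain $\Omega(n^2)$ updates. By the symmetric argument applied to $\Ctwo$ with $\delta = \frac{m}{2}$, $\tau$ also contains $\Omega(m^2)$ updates. Taking the larger of the two bounds (and using $\max(a,b) \ge \frac{a+b}{2}$) already gives $|\tau| = \Omega(n^2+m^2)$; moreover, since the updates witnessing the growth of $\Cone$'s expressiveness are internal to $\Cone$ and those witnessing the growth of $\Ctwo$'s expressiveness are internal to $\Ctwo$ — the two cycles sharing only $c$ — one can in fact add the two counts and conclude directly. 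Combining with the upper bound yields $|\tau| = \Theta(n^2+m^2)$.

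The step I expect to need the most care is the application of Theorem~\ref{thm:quad} to a single cycle of a $\Dm_e$: its proof is written in terms of $x^\ell$, so I would check that what it really bounds is the number of updates that \emph{any} trajectory must perform in order to raise the expressiveness of one fixed cycle by $\delta$, regardless of the behaviour on the other cycle (in particular, that interleaving updates on the other cycle cannot help). Once that reading is confirmed, everything else is elementary arithmetic.
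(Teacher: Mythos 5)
Your proposal is correct and follows essentially the same route as the paper: the upper bound comes from the constructive sequence $\comp{\allzero}=\altone$ (Lemma~\ref{lem:complex}, as packaged in Theorem~\ref{thm:oneAttEven}), and the lower bound from Theorem~\ref{thm:quad} applied with $\delta=\frac{n}{2}$ for $\Cone$ and $\delta=\frac{m}{2}$ for $\Ctwo$, exactly as the paper indicates. Your closing caveat about reading Theorem~\ref{thm:quad} as a per-cycle bound (unaffected by interleaved updates on the other cycle) is the same implicit reading the paper relies on, so nothing further is needed.
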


\subsubsection{At least one cycle is odd} Like BADCs $\Dm_e$, BADCs $\Dm_o$ admit only one attractor but contrary to the latter, they also 
admit a set $I$ of specific non-recurrent configurations, from which updates are ``irreversible'' (\emph{i.e.} configurations of $I$ are not 
accessible). In the sequel, abusing language, these configurations are said to be \emph{irreversible}. Lemma~\ref{lem:inacDouble} below shows 
the irreversibility of some configurations.

\begin{lemma}
	\label{lem:inacDouble}
	Let us consider a BADC $\Dm_o$. The following properties hold:
	\begin{enumerate}
	\item If $\mathscr{C}^s$, $s \in \{\ell, r\}$, is of odd size $k > 1$, then configuration $x$ such that $x^s = ((\one\zero)^{\frac{k-1}
		{2}}\one)$ is irreversible.
	\item If both $\Cone$ and $\Ctwo$ are of odd sizes $n > 1$ and $m > 1$, then configuration $((\zero\one)^{\frac{n-1}{2}} \zero,
		(\zero\one)^{\frac{m-1}{2}} \zero)$ is irreversible.  
	\end{enumerate}
\end{lemma}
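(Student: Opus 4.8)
The plan is to show that each configuration named in the statement admits no incoming \emph{effective} transition that changes the state of an automaton of the cycle(s) occurring in its hypothesis; for item~2 this strengthens to admitting no incoming effective transition at all, i.e.\ being a source of the transition graph --- which is exactly what ``not accessible''/``irreversible'' means here. First I would record the local rules of a canonical $\Dm$: for every $s\in\{\ell,r\}$ and every $i\ge 1$ one has $f_{c^s_i}(x)=x^s_{i-1}$ (the arcs inside the cycles are positive), while the only update that can change $x_\com$ is $\sync$, and it sets $x_\com$ to $f_\com(x)=\neg x^\ell_{n-1}\wedge\neg x^r_{m-1}$ (the two inhibiting arcs feeding the \textsc{and}), so $\com$ switches to $\one$ iff both cycle ends are at $\zero$. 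From this I would derive the reformulation used throughout: $x$ receives an effective transition updating automaton $a$ iff $f_a(x)=x_a$; since neither $f_{c^s_i}$ ($i\ge 1$) nor $f_\com$ depends on the state of $a$ itself, this means $x^s_i=x^s_{i-1}$ when $a=c^s_i$, and $x_\com=\neg x^\ell_{n-1}\wedge\neg x^r_{m-1}$ when $a=\com$. So the whole argument reduces to checking local inconsistency of the displayed configurations.

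For item~1, set $k=\size{\mathscr{C}^s}$. The word $(\one\zero)^{\frac{k-1}{2}}\one$ is strictly alternating, so $x^s_i\neq x^s_{i-1}$ for all $i\in\{1,\dots,k-1\}$: no automaton of $\mathscr{C}^s$ other than $\com$ is locally consistent in $x$. It remains to handle $\com$, and this is where oddness is essential: since $k$ is odd the word ends with $\one$, hence $x^s_{k-1}=\one$ and therefore $f_\com(x)=\neg x^s_{k-1}\wedge\neg x^{\bar s}_{\size{\mathscr{C}^{\bar s}}-1}=\zero$, whereas $x_\com=x^s_0=\one$; so $\com$ is inconsistent too. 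Consequently every predecessor of $x$, if one exists, differs from $x$ only at an automaton of the other cycle $\mathscr{C}^{\bar s}$, and hence shares both the $\mathscr{C}^s$-part and the state of $\com$. In other words, no transition that modifies $\mathscr{C}^s$ can ever produce the value $x^s=(\one\zero)^{\frac{k-1}{2}}\one$, so once $\mathscr{C}^s$ leaves this value it never returns to it: this is the claimed irreversibility. Had $k$ been even the word would end with $\zero$ and $f_\com(x)$ could equal $x_\com$, destroying the obstruction, which explains the hypothesis ``$\size{\mathscr{C}^s}$ odd''.

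For item~2, both $\Cone$ and $\Ctwo$ are odd, so I would run the item~1 argument on both cycles at once. Both words $(\zero\one)^{\frac{n-1}{2}}\zero$ and $(\zero\one)^{\frac{m-1}{2}}\zero$ are strictly alternating, hence $x$ is inconsistent at every automaton of $\Cone$ and of $\Ctwo$ distinct from $\com$. For $\com$: here $x_\com=\zero$, and because $n$ and $m$ are odd the indices $n-1$ and $m-1$ are even, so $x^\ell_{n-1}=\zero$ and $x^r_{m-1}=\zero$; thus $f_\com(x)=\neg\zero\wedge\neg\zero=\one\neq\zero=x_\com$, so $\com$ is inconsistent as well. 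Therefore $x$ is locally inconsistent at every automaton, i.e.\ it has no incoming effective transition whatsoever --- a genuine source, the strongest form of irreversibility. One also sees why \emph{both} cycles must be odd: if one of them, say $\mathscr{C}^{\bar s}$, had even size, the analogous alternating word would make $f_\com(x)=x_\com$, reintroducing a predecessor that modifies $\com$ and reducing us to the weaker conclusion of item~1.

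The step I expect to need the most care is pinning down the precise meaning of ``irreversible''/``not accessible'' and then the matching bookkeeping on $f_\com$: one must use that the two arcs into $\com$ are \emph{inhibiting} --- which is precisely what makes each of $\Cone,\Ctwo$ a negative cycle --- so that $f_\com$ is the \textsc{nor} of the two cycle ends, after which the parity of the cycle length fixes the state of that end in the maximally-alternating configuration and hence whether $f_\com(x)=x_\com$. Beyond that the proof is a line-by-line local check, and, unlike the other lemmas of this subsection, it carries no quantitative estimate.
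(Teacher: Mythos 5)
Your proof is correct, but it follows a genuinely different route from the paper's. The paper argues by contradiction on a hypothetical return sequence $\sigma$: it first replaces the second cycle by a more permissive oscillator $c^\star$ acting on $\com$, then isolates the last occurrence of $\update{c_{n-1}}$ (resp.\ $\update{c_{m-1}}$) in $\sigma$ and shows that the suffix would have to contain a $\sync$ it cannot contain, so no sequence can restore the alternating pattern. You instead do a purely local predecessor analysis: using that a configuration $z$ admits an incoming effective transition updating $a$ iff $f_a(z)=z_a$ (when $f_a$ ignores $a$), you check that in the configurations at hand every automaton of the odd cycle(s) and $\com$ is unstable, so the alternating projection can never be created by any transition; this yields item~1, and for item~2 it even exhibits the configuration as a source of the transition graph. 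Your version is shorter, avoids the $c^\star$ gadget, and in fact delivers the set-level fact (no transition from outside the family of configurations carrying the projection $(\one\zero)^{\frac{k-1}{2}}\one$ ever enters it) that Theorem~\ref{thm:sizeof} actually needs to establish terminality of the complement of $I$. Two small points to tighten: (i) your equivalence tacitly assumes, for $a=\com$, that the other cycle has size at least $2$; if it has size $1$ then $f_\com$ does depend on $x_\com$, and you should note that the direct check $f_\com(\bar{z}^{\com})\neq z_\com$ still goes through because the decisive conjunct $\neg x^s_{k-1}$ is already $\zero$ (the paper later invokes the lemma precisely in an $m=1$ situation); (ii) to fully match the paper's reading of ``irreversible'' (once left, never recovered, hence non-recurrent), add the one-line observation that such a configuration does possess an effective update destroying the projection (e.g.\ $\sync$ or $\update{c^s_1}$, both effective since the word is alternating), so the configuration is indeed abandoned and, by your argument, never regained.
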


\begin{proof}
	Let us consider an arbitrary BADC $\Dm_o$. The proof is divided into two parts. First, without loss of generality, we show that when $n$ 
	is odd, configurations $((\one\zero)^{\frac{n-1}{2}}\one, \centerdot)$ (where $\centerdot$ denotes any configuration $x^r$ of $\Ctwo$) 
	are irreversible. Second, we prove the irreversibility of configuration 
	$((\zero\one)^{\frac{n-1}{2}}\zero,(\zero\one)^{\frac{m-1}{2}}\zero)$ when $n$ and $m$ are odd.\medskip
	
	\indent -- \emph{Irreversibility of $((\one\zero)^{\frac{n-1}{2}}\one,\centerdot)$}\smallskip
	
	In order to simplify this part, we consider a BAN composed of a negative cycle $\mathscr{C}$ of odd size $n$ and of an automaton 
	$c^\star$. This BAN is defined by the following $n + 1$ local transition functions:\\
	\centerline{$f_{c_0}(x) = \neg x_{n-1} \land x_{c^\star}$, $f_{c^\star}(x) = \neg x_{c^\star}$ and $\forall i \in \{1, \ldots, n-1\},\ 
	f_{c_i}(x) = x_{i-1}$.}\vspace*{1mm}
	The configurations of this BAN will be denoted by $(x_{c_0}\ldots x_{c_{n-1}}, x_{c^\star})$. Remark that the idea underlying automaton 
	$c^\star$ is to represent an atomic element that acts on $\mathscr{C}$, as another cycle should do. However, this interacting element is 
	more expressive than a cycle since its state switches as soon as it is updated (it plays the role of an oscillator). In fact, in the 
	context of BADCs, $c^r_{m-1}$ plays the role of $c^\star$. However, the effective updates of $c^r_{m-1}$ are clearly more restricted than 
	that of $c^\star$ since they directly depend on the configuration of $\Ctwo$ and indirectly on that of $\Cone$.\smallskip
	
	Now, let us consider configurations $x = ((\one\zero)^{\frac{n-1}{2}}\one), \centerdot)$ and $x'$ obtained by executing $\update 
	{c_i}$ on $x$, for any $i \in \{0, \ldots, n-1\}$. Given the nature of $x$, notice that $x' \neq \bar{x}^i$. In order to prove the 
	result, we have to show that there are no sequences to reach $x$ from $x'$. To do so, we reason by contradiction.\smallskip
	
	Let us suppose that there exists a sequence $\sigma$ composed of $\pupdate$ and $\sync$ instructions that transforms $x'$ into $x$. Since 
	$x'$ is made from $x$ by updating $c_i$, in order to get back to $x$, $c_i$ in $x'$ needs to be switched again in order to have $x'_i = 
	x_i$, which implies that $c_{i-1}$ has to be updated. Thus, sequence $\sigma$ contains at least one $\update {c_{i-1}}$. Furthermore, 
	if we want $c_{i-1}$ to change its state, $c_{i-2}$ has to be switched, and so on. Thus, by iterating this argument, $c_0$ has to be 
	switched too, which implies that $\sigma$ necessarily contains at least one $\sync$, that leads $c_0$ to take state $\zero$. From this, 
	while $x'_{n-1} = \one$, if we want $c_0$ to get back to state $\one$, $c_{n-1}$ has to switch to state $\zero$, which imposes that 
	$\sigma$ contains also at least one $\update {c_{n-1}}$. Furthermore, if $i = 0$ (\emph{i.e.} $x' = \bar{x}^0$) or $i = n-1$ (\emph{i.e.} 
	$x' = \bar{x}^{n-1}$), it is obvious that $\sigma$ leads the process to get back to $x$ to a punctual configuration where the state of 
	$c_{n-1}$ is $\zero$. As a consequence, any sequence that transforms $x'$ into $x$ needs to temporarily transform $x'$ into a 
	configuration in which $c_{n-1}$ is at state $\zero$. Let us focus essentially on $c_{n-1}$ in order to determine properties that need to 
	hold in $\sigma$. It is easy to understand that the two following properties are necessary:
	\begin{enumerate}
	\item[(a)] $\sigma$ must contain at least one $\update {c_{n-1}}$.
	\item[(b)] Any time $\sigma$ passes through a configuration $y$ where $y_0 = \zero$ and $y_{n-1} = \one$, then the sequence of 
		transformations made by $\sigma$ on $y$ to get back to $x$ contains at least two occurrences of $\update {c_{n-1}}$.
	\end{enumerate}
	
	Let us now consider the suffix of $\sigma$ that begins within the last occurrence of $\update {c_{n-1}}$. According to (a), such a suffix 
	exists. Furthermore, according to (b), this suffix does not contain the instruction $\sync$. Indeed, if it was the case, it would mean 
	that the suffix would temporarily lead $y$ to be transformed into $y'$ such that $y'_0 = \zero$ since $c_{n-1}$ is at state $\one$. Now, 
	according to (b), in such a case, the suffix should contain two occurrences of $\update {c_{n-1}}$, which contradicts the hypothesis made 
	on the suffix according to which it cannot contain any $\update {c_{n-1}}$.\smallskip
	
	From above, if we denote by $z$ the configuration obtained from $x'$ by applying the instructions of $\sigma$ until the last $\update 
	{c_{n-1}}$ (included), because there are no instructions $\sync$ and no more instructions $\update {c_{n-1}}$ in the suffix, $z_0$ and 
	$z_{n-1}$ have to equal $\one$. Now, for $n > 1$ (\emph{i.e.} $c_0 \neq c_{n-1}$), since the last update of $c_{n-1}$ leads it to state 
	$\one$, this means that $z_{n-2} = \one$. Now, from that, the only possible way to get back to $x$ from $z$ would be to put $c_{n-i}$ to 
	state $\zero$ if $i$ is even and to state $\one$ otherwise. But this implies $c_0$ to be at state $\zero$ at a certain step, which would 
	need to call to $\sync$ again, which contradicts the definition of the suffix as stated above. As a result, such sequence $\sigma$ does 
	not exist.\smallskip
	
	We have just shown that configuration $((\one\zero)^{\frac{n-1}{2}}\one,\centerdot)$ cannot be reached, even if an over-expressive 
	element $c^\star$ acting on $c_0$ is considered instead of a second negative cycle. So, $((\one\zero)^{\frac{n-1}{2}}\one,\centerdot)$ is 
	irreversible, and with trivial extensions, we obtained the expected result given in Point~1 of Lemma~\ref{lem:inacDouble}.\medskip
	
	\indent -- \emph{Irreversibility of $((\zero\one)^{\frac{n-1}{2}}\zero,(\zero\one)^{\frac{m-1}{2}}\zero)$}\smallskip
	
	Consider here a BADC $\Dm_o$ where $\Cone$ and $\Ctwo$ are of odd sizes, respectively such that $n > 1$ and $m > 1$. The proof of the 
	irreversibility of $((\zero\one)^{\frac{n-1}{2}}\zero,(\zero\one)^{\frac{m-1}{2}}\zero)$ is done by contradiction too.\smallskip
	
	Let $x'$ be the configuration that results from the update of one automaton in configuration $x = ((\zero\one)^{\frac{n-1}{2}}\zero,
	(\zero\one)^{\frac{m-1}{2}}\zero)$. With the same argument as above, any sequence $\sigma$ that allows to transform $x'$ into $x$ must 
	pass temporarily through a configuration $y$ such that $y_0 = \zero$ and $y^\ell_{n-1} = \one$ or $y^r_{m-1} = \one$. From $y$, in order 
	that $\sigma$ transforms it into $x$, $\sigma$ needs to contain at least one $\update {c_{n-1}}$ or one $\update {c_{m-1}}$ depending on 
	what automaton $c_{n-1}$ or $c_{m-1}$ is at state $\one$ (maybe both). Consider now the last occurrence of $\update {c_{n-1}}$ or 
	$\update {c_{m-1}}$. After that occurrence, $c_{n-1}$ and $c_{m-1}$ are at state $0$. Otherwise, $\sigma$ cannot lead to $x$. From that, 
	we derive that:
	\begin{enumerate}
	\item[\emph{(i)}] the temporary configuration obtained $z \neq x$ is such that $z_0 = 0$, $z^\ell_{n-1} = z^r_{m-1} = 0$ and 
		$z^\ell_{n-2} = 0$ or $z^r_{m-2} = 0$ (maybe both), and
	\item[\emph{(ii)}] no instructions $\sync$ can follow because this would imply the presence of other $\pupdate$ on $c_{n-1}$ or $c_{m-1}$ 
		which contradicts the hypothesis that we focused on the last occurrence of such an $\pupdate$.
	\end{enumerate}

	Now, consider for instance that $z^\ell_{n-2} = 0$, with no loss of generality. From that, as above, the only possible way to get back to 
	$x$ from $z$ would be to put $c_{n-i}$ to state $\zero$ if $i$ is even and to state $\one$ otherwise. But this implies $c_0$ to be at 
	state $\zero$ at a certain step, which would need to call to $\sync$ for $c_0$ to switch to state $\one$. A contradiction with 
	\emph{(ii)} appears, which shows that such sequence $\sigma$ does not exist and, consequently, that $((\zero\one)^{\frac{n-1}{2}}\zero,
	(\zero\one)^{\frac{m-1}{2}}\zero)$ is irreversible, as stated in Point~2 of Lemma~\ref{lem:inacDouble}.\qed	
\end{proof}

Let $I$ be the set of irreversible configurations of a BADC $\Dm_o$ given by Lemma~\ref{lem:inacDouble}. Theorem~\ref{thm:sizeof} below 
proves that $I$ contains in fact all the irreversible configurations and, from this set, generalises Theorem~\ref{thm:oneAttEven} for any 
sort of negative BADCs. Notice that the complexity bounds remain valid. They are consequently not given again. 

\begin{theorem}
	\label{thm:sizeof}
	Let $\alpha: \mathbb{N} \rightarrow \{0,1\}$ with $\alpha(k) = \begin{cases} 
		0 & \text{if } k = 0 \text{ or } k \equiv 1 \mod 2\\
		1 & \text{otherwise}
	\end{cases}$
	 Any negative BADC $\Dm$ admits one attractor of size $2^{n + m - 1} - |I|$, where $|I| = \alpha(n - 1) \times 2^{m - 1}  -  
	\alpha(m - 1) \times 2^{n - 1}$.  
\end{theorem}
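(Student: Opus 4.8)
The plan is to identify the attractor with the set $R=\BB^{n+m-1}\setminus I$ and to establish the statement in three movements: (i) every configuration of $I$ is non‑recurrent, so any attractor lies inside $R$; (ii) $R$ is itself exactly one terminal strongly connected component, hence \emph{is} the (necessarily unique) attractor; (iii) a cardinality count produces $|I|$ and thus the size formula. Movement (i) is short. By Lemma~\ref{lem:inacDouble} the configurations of $I$ are irreversible, i.e.\ inaccessible from every other configuration, and in particular have no incoming effective transition. Moreover a negative $\Dm$ has no stable configuration: its interaction graph has only the two cycles $\Cone$ and $\Ctwo$, both negative, so it contains no positive cycle and the negative‑cycle structure forbids a fixed point; therefore every configuration of $I$ does have an outgoing effective transition but no way back to itself, hence belongs to no TSCC and is transient.

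Movement (ii) is the heart of the argument. That $R$ is closed under effective transitions is immediate: a transition from $R$ into $I$ would make an $I$‑configuration accessible, contradicting Lemma~\ref{lem:inacDouble}. Strong connectivity of $R$ is obtained by replaying the three‑point scheme of Theorem~\ref{thm:oneAttEven}, now forced to stay inside $R$. First, from any $x\in R$ one reaches $\allzero$ via $\simp{x}$ as in Lemma~\ref{lem:simply}; one checks that, from the first $\perase$ on, every configuration visited by $\psimp$ contains two consecutive equal states in each cycle and hence is not an alternating word, so the trajectory never meets an irreversible configuration. Second, from $\allzero$ one reaches a \emph{reversible} configuration $h$ of maximal expressiveness $\lfloor n/2\rfloor+\lfloor m/2\rfloor$: when a cycle has even size the alternating word is used exactly as in $\pcompa$/$\pcompb$, while for an odd cycle $\mathscr{C}^s$ the invariant of $\pcompa$ already delivers the word $(\zero\one)^{(\size{\mathscr{C}^s}-1)/2}\zero$ in $\Cone$, and a correspondingly modified $\pcompb$ does the same for $\Ctwo$; when \emph{both} cycles are odd one cannot use the fully alternating word in both components because $((\zero\one)^{(n-1)/2}\zero,(\zero\one)^{(m-1)/2}\zero)$ and $((\one\zero)^{(n-1)/2}\one,\cdot)$ are precisely irreversible, so one keeps one component fully alternating and pushes the unavoidable single ``defect'' of the other component onto the cyclic position $\size{\mathscr{C}}-1$ (which is unconstrained by Lemma~\ref{lem:copy}'s hypotheses), still at cost $O(n^2+m^2)$. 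Third, from such an $h$ one reaches any target $x'\in R$ by a $\pcopp$‑type sequence as in Lemma~\ref{lem:copyPair}, choosing among the admissible hubs according to which of $x'^\ell,x'^r$ (if any) equals a dangerous alternating word, so that one of the Properties~1--3 of Lemma~\ref{lem:copy} always applies on each cycle; here the very hypothesis $x'\in R$ is what rules out the obstructive sub‑cases, and one verifies the trajectory stays in $R$. Chaining these three steps makes any two configurations of $R$ mutually reachable within $R$, so $R$ is a single SCC; being also closed it is a TSCC, and since every recurrent configuration lies in $R$ it is the only one. Hence the attractor is $R$, of size $2^{n+m-1}-|I|$.

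Movement (iii) is the count of $|I|$ by inclusion--exclusion over the families of Lemma~\ref{lem:inacDouble}. If $\Cone$ has odd size $n>1$, its first item contributes $A=\{x:\ x^\ell=(\one\zero)^{(n-1)/2}\one\}$: the shared automaton $\com$ is then forced to $\one$ and the $m-1$ automata private to $\Ctwo$ are free, so $|A|=2^{m-1}$; symmetrically, if $\Ctwo$ has odd size $m>1$ one gets a family $B$ with $|B|=2^{n-1}$. When both cycles are odd, $A\cap B$ is the single configuration $((\one\zero)^{(n-1)/2}\one,(\one\zero)^{(m-1)/2}\one)$, while the second item of Lemma~\ref{lem:inacDouble} adds the single further configuration $((\zero\one)^{(n-1)/2}\zero,(\zero\one)^{(m-1)/2}\zero)$, which has $\com$ at state $\zero$ and is therefore disjoint from $A\cup B$. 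The $-|A\cap B|$ coming from inclusion--exclusion and the $+1$ coming from that last configuration cancel, so in every parity case $|I|=\alpha(n-1)\cdot 2^{m-1}+\alpha(m-1)\cdot 2^{n-1}$, where $\alpha(k)=1$ exactly when $k>0$ and $k$ is even, i.e.\ when the corresponding cycle has odd size greater than $1$. Together with movements (i)--(ii) this gives the announced attractor size, and since the update sequences of movement (ii) are only cosmetic modifications of those of Theorem~\ref{thm:oneAttEven}, all of its complexity bounds carry over unchanged.

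The main obstacle is movement (ii), and within it the constraint that no routing sequence may enter $I$: this is what forces the parity‑dependent choice of the expressive hub — the naive target $\altone$, or its odd analogue $((\one\zero)^{(n-1)/2}\one,\dots)$, being itself irreversible — and, when both $n$ and $m$ are odd, what forces the fine bookkeeping that locates a maximally expressive reversible configuration and verifies, case by case on whether $x'^\ell$ or $x'^r$ coincides with a dangerous alternating word, that $\pcop$'s preconditions still hold so that $\pcopp$ reaches every configuration of $R$ from the hub.
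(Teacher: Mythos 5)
Your proposal follows essentially the same route as the paper's proof: exclude the irreversible set $I$ via Lemma~\ref{lem:inacDouble}; show that the complement forms a single attractor by the same three-stage machinery ($\psimp$ to reach $\allzero$, a parity-adapted $\pcomp$ reaching a maximally expressive non-irreversible hub with the unavoidable defect placed at the last position of an odd cycle, then $\pcop$/$\pcopp$ via Lemma~\ref{lem:copy}, where the hypothesis $x'\notin I$ is precisely what rules out the failing sub-cases); and count $|I|$ by parity cases. Your explicit closure/terminality bookkeeping and the explicit inclusion--exclusion (the intersection of the two odd-cycle families cancelling the extra configuration with $\com$ at state $\zero$) are finer-grained than the paper's, and your $+$ sign in $|I|$ agrees with the paper's own derivation (the $-$ in the theorem statement is a typo).

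Two minor caveats. First, the aside that irreversible configurations ``have no incoming effective transition'' is false per configuration: a configuration with $x^\ell=(\one\zero)^{\frac{n-1}{2}}\one$ can be entered from another configuration of the same family by updating a private automaton of $\Ctwo$ whenever $x^r$ contains two consecutive equal states. What is true, and what your closure-of-$R$ step (and, implicitly, the paper) actually needs, is that these families cannot be entered from \emph{outside}, which requires a short direct check on updates of the $\Cone$-automata and of $\com$; your operative transience argument (an effective exit covered by the lemma, with no return) is unaffected. Second, the degenerate cases $n=1$ or $m=1$, which the paper treats in a separate paragraph and for which the sequences $\pcompb$ and $\pcopc$ degenerate, are not covered by your hub construction as written, although your counting via $\alpha(0)=0$ handles them correctly.
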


\begin{proof}
	In this proof, we focus on BADC $\Dm_o$ since the case of negative BADCs composed of two cycles of even sizes has been treated 
	previously. Let us begin by showing that $\Dm_o$ admits only one attractor that contains all the configurations except those belonging to 
	$I$.\smallskip
	
	First, we have to prove that any configuration $x$ of $\Dm_o$ can be transformed into the lowest expressive configuration $\allzero$. 
	Following the proof of Lemma~\ref{lem:simply}, we get $\simp {x} = \allzero$ and the related complexity still holds.\smallskip
	
	Second, let us focus on the increase of the expressiveness of configurations. To do so, let us consider two cases: (a) only one cycle is 
	of odd size and we consider that it is $\Cone$ with no loss of generality; (b) both cycles are of odd sizes. According to both these 
	cases, we have:
	\begin{enumerate}
	\item[(a)] $x\zero = ((\zero\one)^{\frac{n-1}{2}}\zero, (\zero\one)^{\frac{m}{2}})$ and $x\one = ((\one\zero)^{\frac{n-1}{2}}\zero, 
		(\one\zero)^{\frac{m}{2}})$ are two of the three most expressive configurations that do not belong to $I$ (the third one is 
		$\bar{x\one}$ that has not to be taken into account because the results for $x\one$ extend to it directly). Notice that $x\zero$ can 
		be transformed into $x\one$ by means of sequence $\sigma_a = \shift {\Cone};$ $\shift {\Ctwo};$ $\update {c_{n-1}};$ $\sync;$. 
		Conversely, the sequence $\sigma'_a = \shift {\Cone};$ $\shift {\Ctwo};$ $\sync;$ allows to reach $x\zero$ from $x\one$.
	\item[(b)] $x\zero = ((\zero\one)^{\frac{n-1}{2}}\one, (\zero\one)^{\frac{m-1}{2}}\zero)$ and $x\one = ((\one\zero)^{\frac{n-1}{2}}\zero, 
		(\one\zero)^{\frac{m-1}{2}}\zero)$ are two of the three most expressive configurations that do not belong to $I$ (the third one is 
		$\bar{x\one}$ and is not considered for the same reason as above). In this case, $x\zero$ can be transformed into $x\one$ by means 
		of sequence $\sigma_b = \shift {\Cone};$ $\shift {\Ctwo};$ $\update {c_{n-1}};$ $\update {c_{m-1}};$ $\sync;$. Also, the 
		sequence $\sigma'_b = \shift {\Cone};$ $\shift {\Ctwo};$ $\update {c_{n-1}};$ $\sync;$ allows to reach $x\zero$ from $x\one$.
	\end{enumerate}
	
	From the reasoning given in the proofs of Lemmas~\ref{lem:complex1},~\ref{lem:complex2} and~\ref{lem:complex}, it can be derived that 
	$\comp{\allzero}=x\zero$, which shows together with $\sigma_a$ and $\sigma_b$ the accessibility of the most expressive configurations 
	from the least expressive ones in any case. Notice that the bound $\Theta(n^2 + m^2)$ remains valid in this case.\smallskip
	
	Third, consider now a new version of $\pcomp$ that takes as parameters a configuration and either $\zero$ or $\one$. More precisely, this 
	new version of $\pcomp$ is defined such that:
	\begin{gather*}
		\comp {z,\zero} = \comp {z} \text{ and}\\
		\comp {z,\one} = \begin{cases}
			\comp {z};\ \sigma_a; & \text{for case (a)}\\
			\comp {z};\ \sigma_b; & \text{for case (b)}
		\end{cases}\text{.}
	\end{gather*}
	Let $x$ and $y$ be two configurations that do not belong to $I$. First, remark that the state of $c_0$ in $\comp 
	{\allzero, y_0}$ equals $y_0$. Consequently, since we have $\comp {\allzero, y_0}_0 = y_0$, thanks to the proofs of Lemmas~\ref{lem:copy} 
	and~\ref{lem:copyPair} and the fact that $\comp {\simp {x}, y_0}$ corresponds to one of the most expressive configurations in any case, 
	$y = \cop {\comp {\simp {x}, y_0}} {y}$ holds. As a result, all the configurations that do not belong to $I$ are recurrent and are 
	reachable from each other, which implies that they compose a unique attractor. Notice also that from this result, we have easily the 
	intermediary result stating that the number of updates to reach any configuration from any of the most expressive configurations is 
	linear. Indeed, since $x\zero$ and $x\one$ can reach each other through the distinct linear sequences $\sigma$s and since we 
	have just shown that they can reach any configuration $y \notin I$ by using $\pcop$, it is direct that the most expressive 
	configurations can transform themselves linearly into any other configuration.\smallskip
	
	To complete this part, by basing ourselves on what has been done until now, let us focus on the cases where either $n$ or $m$ equals $1$. 
	Consider without loss of generality that $m = 1$. We have to distinguish two cases:
	\begin{itemize}
	\item $n \equiv 0 \mod 2$: This case is trivial because no irreversible configurations exist. As a result, such a BADC admits one 
		attractor of size $2^n$ and $I = \emptyset$.
	\item $n \equiv 1 \mod 2$: First, remark that $\sync((\one(\one\zero)^{\frac{n-1}{2}},\one)) = (\zero\one)^{\frac{n-1}{2}}\zero, \zero)$, 
		which is thus not irreversible. Second, according to Lemma~\ref{lem:inacDouble}, $((\one\zero)^{\frac{n-1}{2}}\one, \one)$ 
		is irreversible. Consequently, it is the only one that cannot be reached. As a result, such a BADC admits one attractor of size 
		$2^n-1$ and $I = \{((\one\zero)^{\frac{n-1}{2}}\one, \one)\}$.
	\end{itemize}

	We have proven that a negative BADC $\Dm$ has only one attractor, whatever the cycle parity. However, the size of this 
	attractor depends on the cardinal of $I$, on which we focus from now. Several cases have to be taken into account:
	\begin{itemize}
	\item $n = 1$ or $m = 1$: In this case, as stated just above, if the cycle of size $1$ intersect with a cycle of even size 
		(resp. odd size), there are no irreversible configurations and $|I| = 0$ (resp. there is one irreversible configuration and $|I| = 
		1$). 
	\item $n$ and $m$ are greater than $1$: 
		\begin{itemize}
		\item If both cycles are of even sizes: such BADCs admits a unique attractor of size $2^{n + m - 1}$ and $|I| = 0$ (cf. 
			Theorem~\ref{thm:oneAttEven}).
		\item If only one of the cycles is of odd size: if this cycle is $\Cone$ (resp. $\Ctwo$) then the configurations of the form 
			$((\one\zero)^{\frac{n-1}{2}}\one, \centerdot)$ (resp. $(\centerdot, (\one\zero)^{\frac{n-1}{2}}\one)$) are irreversible and 
			$|I| = 2^{m-1}$ (resp. $|I| = 2^{n-1}$).
		\item If both cycles are of odd sizes: $I$ is in this case composed of configurations of the forms $((\one\zero)^{\frac{n-1}{2}}\one, 
			\centerdot)$ and $(\centerdot, (\one\zero)^{\frac{n-1}{2}}\one)$, and configuration $((\zero\one)^{\frac{n}{2}}\zero,
			(\zero\one)^{\frac{m}{2}}\zero)$. That means that $|I| = 2^{m - 1} + 2^{n - 1}$.
		\end{itemize}
	\end{itemize}
	
	From this, we derive the following generalisation that states that any negative BADC $\Dm$ admits a unique attractor, a stable 
	oscillation of size 
	\begin{equation*}
		2^{n + m - 1} - \left( \alpha(n - 1) \times 2^{m - 1}  +  \alpha(m - 1) \times 2^{n - 1} \right)\text{,}
	\end{equation*}
	where $\alpha: \mathbb{N} \rightarrow \{0,1\}$ with $\alpha(k) = \begin{cases} 
		0 & \text{if } k = 0 \text{ or } k \equiv 1 \mod 2\\
		1 & \text{otherwise}
	\end{cases}$.\qed
\end{proof}

\section{Conclusion and perspectives}
\label{sec:conc}

This paper followed the lines of~\cite{Demongeot2012,Noual2012b} and focused on the dynamical properties of BADCs subjected to the 
asynchronous updating mode. Again, the focus on BADCs is explained by the fact that although cycles have been known to be the engines of 
complexity in interaction networks since the 1980's, their influence on network dynamics is not really known, contrary to the common beliefs. 
This needs to be changed if we want to understand precisely interaction network complexity. However, because of the intrinsic difficulties to 
bring such studies in general frameworks (in general BANs for instance), we needed to restrain the spectrum of intersections considered to 
the ``simplest'' kinds, the tangential ones. In this setting, our contribution was twofold: \emph{(i)} we gave a complete characterisation of 
the dynamical behaviour of asynchronous BADCs by means of \emph{(ii)} new algorithmic tools that bring a new way to view updates in networks 
and a nice understanding of how information is relayed. Obviously, these tools have been built for our purpose and their use is consequently 
limited. Nevertheless, remark that they can be applied almost directly in some more complex networks, in particular those with tangential 
cycle intersections, such as flower graphs for which they will allow to provide characterisation results regarding their behaviours that will 
generalise the existence results given in~\cite{Didier2012}. Furthermore, another perspective would consist in adapting these tools in order 
them to apply to more complex intersections. Beyond the dynamical aspects, notice that the algorithmic tools owe the benefits to represent 
concisely long sequences of updates. About this abstraction, we would like to understand to what extent we can characterise network 
architectures when update sequences (that represent only pieces of dynamics) are given. For instance, the latter could be very useful to find 
networks of specific dynamics complexity classes (in terms of convergence time for instance, or even in terms of number of attractors). To 
finish, this work together with that of~\cite{Noual2012b} (and the differences they present) raises once again the matter of the fundamental 
differences between synchronism and asynchronism whose study deserves to be pursued. 

\bibliography{asyncBADC}

\begin{thebibliography}{10}
\providecommand{\url}[1]{\texttt{#1}}
\providecommand{\urlprefix}{URL }

\bibitem{Demongeot2012}
Demongeot, J., Noual, M., Sen{\'e}, S.: {Combinatorics of Boolean automata
  circuits dynamics}. Discrete Applied Mathematics  160,  398--415 (2012)

\bibitem{Didier2012}
Didier, G., Remy, {\'E}.: {Relations between gene regulatory networks and cell
  dynamics in Boolean models}. Discrete Applied Mathematics  160,  2147--2157
  (2012)

\bibitem{Fates2006}
Fat{\`e}s, N., Regnault, D., Schabanel, N., Thierry, {\'E}.: {Asynchronous
  behaviour of double-quiescent elementary cellular automata}. In: Proceedings
  of LATIN. LNCS, vol. 3887, pp. 455--466. Springer (2006)

\bibitem{Goles1990}
Goles, E., Mart{\'i}nez, S.: {Neural and automata networks: Dynamical behaviour
  and applications}. Kluwer Academic Publishers (1990)

\bibitem{Ising1925}
Ising, E.: {Beitrag zur theorie des ferromagnetismus}. Zeitschrift f{\"u}r
  Physik  31,  253--258 (1925)

\bibitem{Jacob1961}
Jacob, F., Monod, J.: {Genetic regulatory mechanisms in the synthesis of
  proteins}. Journal of Molecular Biology  3,  318--356 (1961)

\bibitem{Jacob1960}
Jacob, F., Perrin, D., Sanchez, C., Monod, J.: {L'op{\'e}ron: groupe de
  g{\`e}nes {\`a} expression coordonn{\'e}e par un op{\'e}rateur}. Comptes
  rendus hebdomadaires de l'Acad{\'e}mie des sciences  250,  1727--1729 (1960)

\bibitem{Kauffman1969}
Kauffman, S.A.: {Metabolic stability and epigenesis in randomly constructed
  genetic nets}. Journal of Theoretical Biology  22,  437--467 (1969)

\bibitem{Kauffman1971}
Kauffman, S.A.: {Current topics in developmental biology}, vol.~6, chap. {Gene
  regulation networks: A theory for their global structures and behaviors}, pp.
  145--181. Elsevier (1971)

\bibitem{McCulloch1943}
McCulloch, W.S., Pitts, W.H.: {A logical calculus of the ideas immanent in
  nervous activity}. Bulletin of Mathematical Biophysics  5,  115--133 (1943)

\bibitem{Melliti2013}
Melliti, T., Regnault, D., Richard, A., Sen{\'e}, S.: {On the convergence of
  Boolean automata networks without negative cycles}. In: Proceedings of
  AUTOMATA. Lecture Notes in Computer Science, vol. 8155, pp. 124--138.
  Springer (2013)

\bibitem{vonNeumann1966}
von Neumann, J.: {Theory of self-reproducing automata}. University of Illinois
  Press (1966)

\bibitem{Noual2012b}
Noual, M.: {Dynamics of circuits and intersecting circuits}. In: Proceedings of
  LATA. Lecture Notes in Computer Science, vol. 7183, pp. 433--444. Springer
  (2012)

\bibitem{Noual2012}
Noual, M.: {Updating automata networks}. Ph.D. thesis, {\'E}cole normale
  sup{\'e}rieure de Lyon (2012)

\bibitem{Noual2013}
Noual, M., Regnault, D., Sen{\'e}, S.: {About non-monotony in Boolean automata
  networks}. Theoretical Computer Science  504,  12--25 (2013)

\bibitem{Remy2008}
Remy, {\'E}., Ruet, P., Thieffry, D.: {Graphic requirement for multistability
  and attractive cycles in a Boolean dynamical framework}. Advances in Applied
  Mathematics  41,  335--350 (2008)

\bibitem{Richard2010}
Richard, A.: {Negative circuits and sustained oscillations in asynchronous
  automata networks}. Advances in Applied Mathematics  44,  378--392 (2010)

\bibitem{Richard2007}
Richard, A., Comet, J.P.: {Necessary conditions for multistationarity in
  discrete dynamical systems}. Discrete Applied Mathematics  155,  2403--2413
  (2007)

\bibitem{Robert1986}
Robert, F.: {Discrete iterations: A metric study}. Springer Verlag (1986)

\bibitem{Schelling1971}
Schelling, T.C.: {Dynamic models of segregation}. Journal of Mathematical
  Sociology  1,  143--186 (1971)

\bibitem{Thomas1973}
Thomas, R.: {Boolean formalization of genetic control circuits}. Journal of
  Theoretical Biology  42,  563--585 (1973)

\bibitem{Thomas1981}
Thomas, R.: {On the relation between the logical structure of systems and their
  ability to generate multiple steady states or sustained oscillations}. In:
  {Numerical methods in the study of critical phenomena}, Springer Series in
  Synergetics, vol.~9, pp. 180--193. Springer-Verlag (1981)

\end{thebibliography}

\end{document}